\title{Tight Bounds for Volumetric Spanners and Applications}
\author{Aditya Bhaskara\thanks{University of Utah. Email: \href{mailto:bhaskaraaditya@gmail.com}{bhaskaraaditya@gmail.com}} \and Sepideh Mahabadi\thanks{Microsoft Research--Redmond. Email: \href{mailto:smahabadi@microsoft.com}{smahabadi@microsoft.com}} \and Ali Vakilian\thanks{Toyota Technological Institute at Chicago (TTIC). Email: \href{mailto:vakilian@ttic.edu}{vakilian@ttic.edu}}}
\date{}
\newtheorem{theorem}{Theorem}[section]
\newtheorem{lemma}[theorem]{Lemma}
\newtheorem{corollary}[theorem]{Corollary}
 \newtheorem{fact}[theorem]{Fact}
\newtheorem{definition}[theorem]{Definition}
\newtheorem{defn}[theorem]{Definition}
\newcommand{\eps}{\epsilon}%
\def\ceil#1{\lceil {#1} \rceil}
\def\script#1{\mathcal{#1}}
\def\mvee{\textsf{MVEE}}
\def\sL{\script{L}}
\def\cE{\mathcal{E}}
\newcommand{\lsr}{\textsf{\small LocalSearch-R}}
\newcommand{\lsnr}{\textsf{\small LocalSearch-NR}}
\newcommand{\iprod}[1]{\langle #1 \rangle}
\newcommand{\tr}{\text{Tr}}
\newcommand{\norm}[1]{\| #1 \|}
\def\supp{\text{supp}}
\def\opt{\textsc{OPT}}
\newcommand\vol{{\operatorname{vol}}}
\def\R{\mathbb{R}}
\begin{document}

\maketitle

\begin{abstract}
Given a set of points of interest, a volumetric spanner is a subset of the points using which all the points can be expressed using ``small'' coefficients (measured in an appropriate norm). Formally, given a set of vectors $X = \{v_1, v_2, \dots, v_n\}$, the goal is to find $T \subseteq [n]$ such that every $v \in X$ can be expressed as $\sum_{i\in T} \alpha_i v_i$, with $\norm{\alpha}$ being small.  This notion, which has also been referred to as a well-conditioned basis, has found several applications, including bandit linear optimization, determinant maximization, and matrix low rank approximation. In this paper, we give almost optimal bounds on the size of volumetric spanners for all $\ell_p$ norms, and show that they can be constructed using a simple local search procedure. We then show the applications of our result to other tasks and in particular the problem of finding coresets for the Minimum Volume Enclosing Ellipsoid (MVEE) problem.
\end{abstract}

\section{Introduction}\label{sec:intro}
In many applications in machine learning and signal processing, it is important to find the right ``representation'' for a collection of data points or signals. As one classic example, in the column subset selection problem (used in applications like feature selection, \citep{boutsidis2008unsupervised}), the goal is to find a small subset of a given set of vectors that can represent all the other vectors via linear combinations. In the \emph{sparse coding} problem, the goal is to find a basis or dictionary under which a collection of vectors admit a sparse representation (see~\citep{OLSHAUSEN19973311}).  

In this paper, we focus on finding ``bases'' that allow us to represent a given set of vectors using \emph{small} coefficients.  
A now-classic example is the notion of an Auerbach basis. Auerbach used an extremal argument to prove that for any compact subset $X$ of $\R^d$, there exists a basis of size $d$ (that is a subset of $X$) such that every $v \in X$ can be expressed as a linear combination of the basis vectors using coefficients of magnitude $\le 1$ (see, e.g., \citep{lindenstrauss1996classical}). This notion was rediscovered in the ML community in the well-known work of~\cite{awerbuch2008online}, and subsequently in papers that used such a basis as directions of exploration in bandit algorithms. The term {\em barycentric spanner} has been used to refer to Auerbach bases. More recently, the paper of~\citet{hazan2016volumetric} introduced an $\ell_2$ version of barycentric spanners, which they called \emph{volumetric spanners}, and use them to obtain improved bandit algorithms.

The same notion has been used in the literature on matrix sketching and low rank approximation, where it has been referred to as a ``well-conditioned basis'' (or a \emph{spanning subset}); see~\citet{dasgupta2009sampling}. 
These works use well-conditioned bases to ensure that every small norm vector (in some normed space) can be expressed as a combination of the vectors in the basis using small coefficients. 
Recently, \citet{woodruff2023new} used the results of~\citep{todd2016minimum} and~\citep{kumar2005minimum} on minimum volume enclosing ellipsoids (MVEE) to show the existence of a well-conditioned spanning subset of size $O(d \log \log d)$. (Note that this bound was already superseded by the work of~\cite{hazan2016volumetric}, who used different techniques.) Moreover,~\citep{woodruff2023new} demonstrated the use of well-conditioned bases for a host of matrix approximation problems (in offline and online regimes) such as low-rank approximation problems, $\ell_p$ subspace embedding, and $\ell_p$ column subset selection.

Our main contribution in this paper is to show that a simple local search algorithm yields volumetric spanners with parameters that improve both lines of prior work by~\cite{hazan2016volumetric} and~\cite{woodruff2023new}. Our arguments also allow us to study the case of having a general $\ell_p$ norm bound on the coefficients. Thus, we obtain a common generalization with the results of~\cite{awerbuch2008online} on barycentric spanners (which correspond to the case $p=\infty$). Further our results can be plugged in, following the same approach of \cite{woodruff2023new}, to obtain improvements on a range of matrix approximation problems in these settings. 

One application we highlight is the following. Volumetric spanners turn out to be closely related to another well-studied problem, that of finding the minimum volume enclosing ellipsoid (MVEE) for a given set of points, or more generally, for a given convex body $K$. This is a classic problem in geometry~\citep{Welzl91, khachiyan1990complexity}. The celebrated result of Fitz John (e.g., see \citep{ball1992ellipsoids}) characterized the optimal solution for general $K$. Computationally, the MVEE can be computed using a semidefinite programming relaxation~\citep{boyd2004convex}, and more efficient algorithms have subsequently been developed; see~\citep{cohen2019near}. Coresets for MVEE (defined formally below) were used to construct well-conditioned spanning subsets in the recent work of~\cite{woodruff2023new}. We give a result in the opposite direction, and show that the local search algorithm for finding well-conditioned spanning sets can be used to obtain a coreset of size $O(d/\epsilon)$. This quantitatively improves upon prior work, as we now discuss. 

\subsection{Our Results}\label{sec:our-results}
We start with some notation. Suppose $X = \{v_1, v_2, \dots, v_n\}$ is a set of vectors in $\R^d$. We say that a \emph{subset} $S\subseteq [n]$ is a \emph{volumetric spanner}~\citep{hazan2016volumetric} or a \emph{well-conditioned spanning subset} \citep{woodruff2023new}, if for all $j \in [n]$, we can write $v_j = \sum_{i \in S} \alpha_i v_i$, with $\norm{\alpha}_2 \le 1$. More generally, we will consider the setting in which we are given parameters $c, p$, and we look to satisfy the condition $\norm{\alpha}_p \le c$ (refer to Section~\ref{sec:prelims}) for a formal definition. 
Our main results here are the following.

\paragraph{Volumetric spanners via local search.} For the $\ell_2$ case, we show that there exists a volumetric spanner as above with $|S| \le 3d$. Moreover, it can be found via a single-swap local search procedure (akin to ones studied in the context of determinant maximization~\citep{madan2019combinatorial}).  

In the case of $p=2$, there are two main prior works. The first is the work of~\citet{hazan2016volumetric}. They obtain a linear sized basis, similar to our result. However, their result is weaker in terms of running time (by a factor roughly $d^2$), as well as the constants in the size of the basis. Moreover, our algorithm is much simpler, it is simply local search with an appropriate objective, while theirs involves running a spectral sparsification subroutine (that involves keeping track of barrier potentials, etc.) followed by a rounding step.

The second work is the recent result of~\citet{woodruff2023new} which utilizes the prior work by~\citet{todd2016minimum}. Their algorithm is simple, essentially employing a greedy approach. However, it incurs an additional $\log\log d$ factor in the size of the spanner due to the analysis of the greedy algorithm. Removing this factor is interesting conceptually, as it shows that local search with an appropriate objective can achieve a stronger guarantee than the best known result for greedy.

\paragraph{General $p$ norms.} For the case of general $\ell_p$ norms, we show that a local search algorithm can still be used to find the near-optimal sized volumetric spanners. However, the optimal size exhibits three distinct behaviors:
\begin{itemize}
    \item For $p=1$, we show that there exist a set $X$ of size $n = \exp(d)$ for which any $\ell_1$ volumetric spanner of size strictly smaller than $n$ can only achieve $\norm{\alpha}_1 = \widetilde{\Omega}(\sqrt{n})$. See Theorem~\ref{thm:lb-well-cond-L1} for the formal statement. 
    \item For $p \in (1,2)$, we show that $\ell_p$ volumetric spanners that can achieve $\norm{\alpha}_p \le 1$ exist, but require $|S| = \Omega\left( d^{\frac{p}{2p-2}} \right)$. For strictly smaller sized $S$, we show a lower bound akin to the one above for $p=1$. See Theorem~\ref{thm:lb-well-cond-Lp}.
    \item For $p > 2$, an $\ell_p$ volumetric spanner (achieving $\norm{\alpha}_p \le 1$) of size $3d$ exists trivially because of the corresponding result for $p=2$. See Theorem~\ref{thm:l2-spanner} and Corollary~\ref{thm:lp-spanner}. 
\end{itemize}
Our results show that one-swap local search yields near-optimal sized volumetric spanners for all $\ell_p$ norms. 
We note that a unified treatment of the $\ell_p$ volumetric spanner (for general $p$), along with matching lower bounds have not been done in any of the prior work. Existing results treated the cases $p=2$ and $p=\infty$, using different techniques.

\paragraph{Coresets for MVEE.} While well-conditioned spanning subsets have several applications~\citep{woodruff2023new}, we highlight one in particular as it is a classic problem. Given a symmetric convex body $K$, the minimum volume enclosing ellipsoid (MVEE) of $K$, denoted $\mvee(K)$, is defined as the ellipsoid $\cE$ that satisfies $\cE \supset K$, while minimizing $\vol(\cE)$. We show that for any $K$, there exists a subset $S$ of $O\left( \frac{n}{\epsilon}\right)$ points of $K$, such that \[ \vol(\mvee(K)) \le (1+\eps)^d \cdot \vol (\mvee(S)). \]
This result is formally stated in Theorem~\ref{thm:mvee-coreset}. We define such a set $S$ to be a coreset, and while it is weaker than notions of coresets considered for other problems (see the discussion in Section~\ref{sec:prelims-coreset}), it is the one used in the earlier works of~\cite{todd2016minimum,kumar2005minimum}. We thus improve the size of the best known coreset constructions for this fundamental problem, indeed, by showing that a simple local search yields the desired coreset. This is conceptually interesting, because it matches the coreset size for a much simpler object, that is the axis-parallel bounding box of a set of points (which always has a coreset of size $\le 2d$, obtained by taking the two extreme points along each axis).

{\bf Other applications.} Our result can be used as a black-box to improve other results in the recent work of~\cite{woodruff2023new}, such as entrywise Huber low rank approximation, average top $k$ subspace embeddings and cascaded norm subspace embeddings and oblivious $\ell_p$ subsapce embdeddings. In particular, we show that the local search algorithm provides a simple existential proof of oblivious $\ell_p$ subsapce embdeddings for all $p>1$. In this application, the goal is to find a small size ``spanning subset'' of a whole subspace of points (i.e., given a matrix $A$, the subspace is $\{x | \|Ax\|_p =1\}$), rather than a finite set.
Our results for oblivious $\ell_p$ subsapce embdedding improves the bounds of non-constructive solution of~\cite{woodruff2023new} by shaving a factor of $\log\log d$ in size.

\subsection{Related work}\label{sec:related}
In the context of dealing with large data sets, getting simple algorithms based on greedy or local search strategies has been a prominent research direction. A large number of works have been on focusing to prove theoretical guarantees for these simple algorithms (e.g.~\citep{madan2019combinatorial, altschuler2016greedy, mahabadi2019composable, civril2009selecting, mirzasoleiman2013distributed, anari2022sampling}). Our techniques are inspired by these works, and contribute to this literature.

Well-conditioned bases are closely related to ``column subset selection'' problem, where the goal is to find a small (in cardinality) subset of columns whose combinations can be used to approximate all the other columns of a matrix. Column subset selection has been used for matrix sketching, ``interpretable'' low rank approximation, and has applications to streaming algorithms as well. The classic works of~\citep{frieze2004fast,drineas2006subspace,boutsidis2009improved} all study this question, and indeed, the work of~\cite{boutsidis2009improved} exploit the connection to well-conditioned bases.

More broadly, with the increasing amounts of available data, there has been a significant amount of work on data summarization, where the goal is to find a small size set of representatives for a data set. Examples include
subspace approximation~\citep{achlioptas2007fast}, projective clustering~\citep{deshpande2006matrix, agarwal2004k}, determinant maximization~\citep{civril2009selecting, gritzmann1995largest, nikolov2015randomized}, experimental design problems~\citep{pukelsheim2006optimal}, sparsifiers~\citep{batson2009twice}, and coresets~\citep{agarwal2005geometric}, which all have been extensively studied in the literature.
Our results on coresets for MVEE are closely related to a line of work on \emph{contact points} of the John Ellipsoid (these are the points at which an MVEE for a convex body touches the body). \cite{srivastava2012contact}, improving upon a work of~\cite{rudelson1997contact}, showed that any convex $K$ in $\R^d$ can be well-approximated by another body $K'$ that has at most $O\left( \frac{d}{\eps^2} \right)$ contact points with its corresponding MVEE (and is thus ``simpler''). While this result implies a coreset for $K$, it has a worse dependence on $\eps$ than our results.

\section{Preliminaries and Notation} \label{sec:prelims}
\begin{definition}[$\ell_p$-volumetric spanner]\label{def:well-cond-set}
Given a set of $n\ge d$ vectors $\{v_i\}_{i\in [n]} \subset \mathbb{R}^d$ and $p \ge 1$, a subset of vectors indexed by $S \subset [n]$ is an $c$-approximate $\ell_p$-volumetric spanner  of size $|S|$ if for every $j\in [n]$, $v_j$ can be written as $v_j = \sum_{i\in S} \alpha_i v_i$ where $\|\alpha\|_p \le c$.

In particular, when $c=1$ the set is denoted as an $\ell_p$-volumetric spanner of $\{v_1, \cdots, v_n\}$. 
\end{definition}

\noindent {\bf Determinant and volume.} For a set of vectors $\{v_1, v_2, \dots, v_d\} \in \R^d$, $\det \left( \sum_{i=1}^d v_i v_i^T \right)$ is equal to the square of the volume of the parallelopiped formed by the vectors $v_1, v_2, \dots, v_d$ with the origin.

The {\it determinant maximization problem} is defined as follows. Given $n$ vectors $v_1, v_2, \dots, v_n \in \R^d$, and a parameter $k$, the goal is to find $S \subseteq [n]$ with $|S|=k$, so as to maximize $\det\left( \sum_{i \in S} v_i v_i^T \right)$. In this paper, we will consider the case when $k \ge d$.

\begin{fact}[Cauchy-Binet formula] Let $v_1, \cdots, v_n \in \mathbb{R}^d$, with $n \ge d$. Then
\begin{align*}
    \det \left( \sum_{i=1}^n v_i v_i^T \right) = \sum_{S\subset [n], |S|=d} \det \left( \sum_{i\in S} v_i v_i^T \right)
\end{align*}
\end{fact}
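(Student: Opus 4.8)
The plan is to recognize the left-hand side as a Gram-type determinant and reduce it to the classical Cauchy–Binet identity for a product of two rectangular matrices. First I would stack the vectors as columns of a single $d \times n$ matrix $V = [\,v_1 \mid v_2 \mid \cdots \mid v_n\,]$, so that $\sum_{i=1}^n v_i v_i^T = V V^T$. The goal then becomes showing $\det(VV^T) = \sum_{|S|=d} \det(V_S)^2$, where $V_S$ denotes the $d \times d$ submatrix of $V$ formed by the columns indexed by $S$. This suffices because $\det(V_S)^2 = \det(V_S V_S^T) = \det\big(\sum_{i \in S} v_i v_i^T\big)$, which is exactly the summand on the right-hand side.

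The core identity $\det(VV^T) = \sum_{|S|=d}\det(V_S)^2$ I would establish directly from the Leibniz expansion and the multilinearity of the determinant, rather than quoting the general $\det(AB)$ form as a black box (though one may). Expanding $\det(VV^T) = \sum_{\sigma} \mathrm{sgn}(\sigma)\prod_{k=1}^d (VV^T)_{k,\sigma(k)}$ and substituting $(VV^T)_{k,\sigma(k)} = \sum_{i} V_{k,i} V_{\sigma(k),i}$, I would choose an index $i_k \in [n]$ for each row $k$ and interchange the order of summation. This produces $\det(VV^T) = \sum_{(i_1,\dots,i_d)} \big(\prod_k V_{k,i_k}\big)\,\det\big([\,v_{i_1}\mid\cdots\mid v_{i_d}\,]\big)$, where the determinant factor arises precisely as $\sum_\sigma \mathrm{sgn}(\sigma)\prod_k V_{\sigma(k),i_k}$.

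The key step, and the only place demanding care, is collapsing the sum over ordered tuples $(i_1,\dots,i_d)$ into a sum over unordered size-$d$ subsets. Since the determinant factor is antisymmetric in the indices $i_k$ and vanishes whenever two of them coincide, only tuples of distinct indices survive, and these group into the $d!$ orderings of each $d$-subset $S$. Within a fixed group, the factor $\det([\,v_{i_{\pi(1)}}\mid\cdots\mid v_{i_{\pi(d)}}\,]) = \mathrm{sgn}(\pi)\det(V_S)$, while summing the remaining factor against these signs reconstitutes a second copy of the determinant via $\sum_\pi \mathrm{sgn}(\pi)\prod_k V_{k,i_{\pi(k)}} = \det(V_S)$. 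Hence each subset $S$ contributes exactly $\det(V_S)^2$, and summing over all $S$ of size $d$ yields the claim. I expect this symmetrization–antisymmetrization bookkeeping to be the sole genuine subtlety; the matrix reformulation and the identification $\det(V_S)^2 = \det\big(\sum_{i\in S} v_i v_i^T\big)$ are routine.
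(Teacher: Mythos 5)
Your proof is correct. Note that the paper itself offers no proof of this statement---it is recorded as a classical Fact and used as a black box---so there is no argument to compare against. Your derivation (stacking the vectors into a $d\times n$ matrix $V$, writing the left-hand side as $\det(VV^T)$, expanding via the Leibniz formula, exchanging the order of summation, discarding tuples with repeated indices by antisymmetry, and regrouping the surviving ordered tuples into the $d!$ orderings of each $d$-subset to produce $\det(V_S)^2 = \det\bigl(\sum_{i\in S} v_i v_i^T\bigr)$) is the standard textbook proof of Cauchy--Binet, and every step, including the symmetrization bookkeeping you flag as the one delicate point, is handled correctly.
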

\begin{lemma}[Matrix Determinant Lemma]\label{lem:md}
Suppose $A$ is an invertible square matrix and $u, v$ are column vectors, then
\begin{align*}
    \det(A + uv^T) = (1 + v^T A^{-1}u) \det(A).
\end{align*}
\end{lemma}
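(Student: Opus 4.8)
The plan is to reduce the general statement to the special case $A = I$ by factoring out $A$. Since $A$ is invertible, I would write $A + uv^T = A\,(I + A^{-1}u\,v^T)$, so that by multiplicativity of the determinant,
\[
\det(A + uv^T) = \det(A)\cdot \det\!\left(I + A^{-1}u\,v^T\right).
\]
Setting $w := A^{-1}u$, it then suffices to establish the rank-one identity $\det(I + w\,v^T) = 1 + v^T w$; substituting $w = A^{-1}u$ back in yields exactly $\det(A)\,(1 + v^T A^{-1}u)$, which is the claimed expression.

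For the rank-one identity, I would argue via the eigenvalues of $M := w\,v^T$, where I write $d$ for the size of $A$. This matrix has rank at most one, so $0$ is an eigenvalue of algebraic multiplicity at least $d-1$; concretely, every vector orthogonal to $v$ lies in the kernel of $M$, giving a $(d-1)$-dimensional eigenspace for the eigenvalue $0$ when $v \neq 0$. The remaining eigenvalue is $v^T w$, witnessed by $M w = w\,(v^T w) = (v^T w)\,w$. Hence $I + M$ has eigenvalues $1$ (with multiplicity $d-1$) and $1 + v^T w$, and since the determinant equals the product of the eigenvalues counted with algebraic multiplicity, $\det(I + w\,v^T) = (1 + v^T w)\cdot 1^{\,d-1} = 1 + v^T w$. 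The degenerate cases $v = 0$ or $w = 0$ make both sides equal to $1$ and are immediate.

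An alternative I would keep in mind, which avoids any bookkeeping of eigenvalue multiplicities, is to evaluate the determinant of the $(d+1)\times(d+1)$ block matrix
\[
\begin{pmatrix} A & -u \\ v^T & 1 \end{pmatrix}
\]
in two ways using the Schur-complement formula. Eliminating the bottom-right $1\times 1$ block (which is trivially invertible) gives $\det(A + uv^T)$, while eliminating the top-left block $A$ (invertible by hypothesis) gives $\det(A)\,(1 + v^T A^{-1}u)$; equating the two expressions proves the lemma directly.

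I do not expect a genuine obstacle here, as this is a classical identity. The only point requiring a little care in the first route is to justify that the determinant equals the product of eigenvalues with algebraic multiplicity even when $v^T w = 0$, in which case $w\,v^T$ is nilpotent rather than diagonalizable; this still follows since the determinant is the product of the roots of the characteristic polynomial. The block-matrix route sidesteps this subtlety entirely, so if a fully self-contained argument is preferred I would present that version.
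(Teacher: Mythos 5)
Your proof is correct. The paper states the Matrix Determinant Lemma as a classical preliminary fact without giving any proof, so there is no argument in the paper to compare against; both of your routes --- reducing to the rank-one identity $\det(I + wv^T) = 1 + v^Tw$ via the characteristic polynomial of $wv^T$, and the two-way Schur-complement evaluation of the bordered matrix --- are standard and sound, and you correctly flag and resolve the only delicate point (the nilpotent case $v^Tw = 0$) in the eigenvalue route.
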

\begin{lemma}[Sherman-Morrison formula]\label{lem:sm}
Suppose $A$ is an invertible square matrix and $u,v$ are column vectors. Then, $A + uv^\top$ is invertible iff $1 + v^\top A^{-1} u \neq 0$. In this case,
\begin{align*}
    (A + uv^T)^{-1} = A^{-1} - \frac{A^{-1} uv^T A^{-1}}{1 + v^T A^{-1} u}
\end{align*}
\end{lemma}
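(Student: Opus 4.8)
The statement has two parts — an invertibility criterion and a closed form for the inverse — and the plan is to handle them separately, using the already-stated Matrix Determinant Lemma (Lemma~\ref{lem:md}) for the first and a direct verification for the second.

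For the invertibility criterion, Lemma~\ref{lem:md} gives $\det(A + uv^T) = (1 + v^T A^{-1} u)\,\det(A)$. Since $A$ is invertible, $\det(A) \neq 0$, so $\det(A + uv^T) = 0$ precisely when the scalar $1 + v^T A^{-1} u$ equals $0$. As a square matrix is invertible iff its determinant is nonzero, this yields exactly the claimed equivalence: $A + uv^T$ is invertible iff $1 + v^T A^{-1} u \neq 0$.

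For the formula, I would assume $\beta := 1 + v^T A^{-1} u \neq 0$, set $B := A^{-1} - \frac{1}{\beta}\,A^{-1} u v^T A^{-1}$ as the candidate inverse, and verify $(A + uv^T)\,B = I$ by expanding the product. This produces four terms: the identity term $A A^{-1} = I$ and three terms each proportional to the rank-one factor $u v^T A^{-1}$. The key observation is that $\gamma := v^T A^{-1} u$ is a \emph{scalar}, so it factors out of the matrix products, and the combined coefficient of $u v^T A^{-1}$ becomes $-\tfrac{1}{\beta} + 1 - \tfrac{\gamma}{\beta} = \tfrac{\beta - 1 - \gamma}{\beta}$, which vanishes because $\beta - 1 = \gamma$. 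Hence $(A + uv^T)\,B = I$, and since these are square matrices a one-sided inverse is automatically two-sided, so $B = (A + uv^T)^{-1}$.

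I do not expect a genuine obstacle here: the argument is a routine verification rather than a deep one. The only points requiring care are keeping track of the rank-one terms and recognizing that $v^T A^{-1} u$ is a scalar so that the cancellation $\beta - 1 - \gamma = 0$ goes through. If one preferred a self-contained derivation that avoids Lemma~\ref{lem:md}, the mild extra work would be to also check $B\,(A + uv^T) = I$ directly in order to recover invertibility, but invoking the determinant lemma makes even that unnecessary.
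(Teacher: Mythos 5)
Your proof is correct: the determinant-lemma argument settles the invertibility equivalence, and the direct expansion of $(A+uv^T)B$ with the scalar cancellation $\beta - 1 - \gamma = 0$ verifies the formula. The paper itself states this lemma as a classical fact in the preliminaries and gives no proof, so there is nothing to compare against; your routine verification is exactly the standard argument one would supply.
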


We will also use the following inequality, which follows from the classic H\"{o}lder's inequality.
\begin{lemma}\label{lem:holder-ineq}
For any $1 \le p \le q$ and $x\in \mathbb{R}^n$, $\|x\|_p \le n^{1/p - 1/q} \|x\|_q$.
\end{lemma}

\subsection{Coresets for MVEE}\label{sec:prelims-coreset}
As discussed earlier, for a set of points $X \subset \R^d$, we denote by $\mvee(X)$ the minimum volume enclosing ellipsoid (MVEE) of $X$. We say that $S$ is a coreset for MVEE on $X$ if
\[ \vol(\mvee(X)) \le (1+\eps)^d \cdot \vol (\mvee(S)). \]

{\bf Strong vs. weak coresets.} The notion above agrees with prior work, but it might be more natural (in the spirit of {\em strong} coresets considered for problems such as clustering; see, e.g.,~\cite{CohenAddad21}) to define a coreset as a set $S$ such that for any $\cE \supset S$, $(1+\eps) \cE \supset X$. Indeed, this guarantee need not hold for the coresets we (and prior work) produce. An example is shown in Figure~\ref{fig:example}.

\begin{figure}[!h]
    \centering
    \includegraphics[scale=0.5]{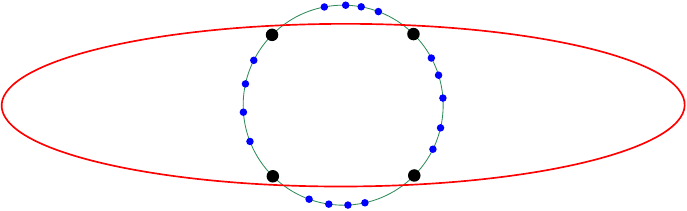}
    \caption{Suppose $X$ is the set of all points (blue and black), and let $S$ be the set of black points. While $\mvee(X) = \mvee(S)$, there can be ellipsoids like the one in red, that contain $S$ but not $X$ even after scaling up by a small constant.}
    \label{fig:example}
\end{figure}

\section{Local Search Algorithm for Volumetric Spanners}\label{sec:local-search}
We will begin by describing simple local search procedures $\lsr$ and $\lsnr$. The former allows ``repeating'' vectors (i.e., choosing vectors that are already in the chosen set), while the latter does not. 

$\lsnr$ will be used for constructing well-conditioned bases, and $\lsr$ will be used to construct coresets for the minimum volume enclosing ellipsoid problem. 

\begin{algorithm}
   \caption{Procedure $\lsnr$}
   \label{alg:local-search}
   \begin{algorithmic}[1]
   \STATE {\bf Input:} Set of vectors $\{v_1, v_2, \dots, v_n\} \subseteq \R^d$, parameter $\delta >0$, integer $r\ge d$
   \STATE {\bf Output:} Set of indices $S$
    \STATE Initialize $S$ of size $r$ using the greedy procedure described in the text
   \STATE Define $M = \sum_{i \in S} v_i v_i^T$
   \WHILE{$\exists ~i \in S$ and $j \in [n] \setminus S$ such that $\det(M- v_i v_i^T + v_j v_j^T) > (1+\delta) \det(M)$}{
   \STATE\label{alg:step-remove} Set $S \leftarrow S \setminus \{i\} \cup \{j\}$
   \STATE $M \leftarrow M - v_i v_i^T + v_j v_j^T$
   }
   \ENDWHILE
   \STATE Return S
\end{algorithmic}
\end{algorithm}

\paragraph{Initialization.} The set $S$ is initialized as the output of the standard greedy algorithm for volume maximization~\citep{civril2009selecting} running for $d$ iterations, and then augmented with a set of $(r-d)$ arbitrary vectors from $\{v_1, \dots, v_n\}$.

\paragraph{Procedure $\lsr$.} The procedure $\lsr$ (where we allow repetitions) is almost identical to Algorithm~\ref{alg:local-search}. It uses the same initialization, however, the set $S$ that is maintained is now a multiset. More importantly, when finding $j$ in the local search step, $\lsr$ looks over all $j \in [n]$ (including potentially $j\in S$). Also in this case, removing $i$ from $S$ in Line~\ref{alg:step-remove} corresponds to removing ``one copy'' of $i$.

\subsection{Running Time of Local Search}\label{sec:ls-runtime}
We will assume throughout that the dimension of $\text{span}(\{v_1, v_2, \dots, v_n\})$ is $d$ (i.e., the given vectors span all of $\R^d$; this is without loss of generality, as we can otherwise restrict to the span). 

The following lemma bounds the running time of local search in terms of the parameters $r, \delta$. We note that we only focus on bounding the \emph{number of iterations} of local search. Each iteration involves potentially computing $nr$ determinants, and assuming the updates are done via the matrix determinant lemma and the Sherman-Morrison formula, the total time is roughly $O(nr d^2)$. This can be large for large $n, r$, and it is one of the well-known drawbacks of local search.  

\begin{lemma}\label{lem:runtime-ls}
The number of iterations of the while loop in the procedures $\lsr$ and $\lsnr$ is bounded by
\[ O \left( \frac{d}{\delta} \cdot \log r \right). \]
\end{lemma}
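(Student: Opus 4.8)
The plan is to run a standard determinant-potential argument: I track the scalar $\det(M)$ across the iterations of the while loop, show it grows geometrically, and bound the total multiplicative growth it can sustain. By the loop condition, each executed swap replaces $M$ by $M - v_iv_i^T + v_jv_j^T$ with $\det(M - v_iv_i^T + v_jv_j^T) > (1+\delta)\det(M)$. So writing $M_0$ for the initial matrix and $M_t$ for the matrix after $t$ iterations, we get $\det(M_t) > (1+\delta)^t\det(M_0)$, and therefore $t < \log\!\big(\det(M_t)/\det(M_0)\big)/\log(1+\delta)$. Since $\log(1+\delta) = \Omega(\delta)$ for $\delta \in (0,1]$, it suffices to prove the uniform bound $\log\!\big(\det(M_t)/\det(M_0)\big) = O(d\log r)$.

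The core of the argument is a two-sided control of $\det(M)$ in terms of $D^*_d := \max_{|T|=d}\det\!\big(\sum_{i\in T} v_iv_i^T\big)$, the maximum squared volume over $d$-subsets. For the upper bound, every reachable $M_t$ equals $\sum_{i\in S_t} v_iv_i^T$ for a (multi)set $S_t$ of size $r$, so the Cauchy-Binet formula gives $\det(M_t) = \sum_{T\subseteq S_t,\,|T|=d}\det\!\big(\sum_{i\in T} v_iv_i^T\big) \le \binom{r}{d}\,D^*_d$; repeated indices in $S_t$ contribute zero since their Gram matrices are singular, so this holds verbatim for both $\lsr$ and $\lsnr$. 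For the lower bound, I invoke the approximation guarantee of the greedy volume-maximization initialization~\citep{civril2009selecting}: as the inputs span $\R^d$, the $d$ greedily chosen vectors form a basis whose squared volume is at least $D^*_d/C$ for some $C$ with $\log C = O(d\log d)$, and augmenting with the remaining $r-d$ vectors only adds PSD terms, so $\det(M_0)\ge D^*_d/C > 0$.

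Combining the two bounds yields $\det(M_t)/\det(M_0) \le C\binom{r}{d}$, so $\log\!\big(\det(M_t)/\det(M_0)\big) \le \log C + \log\binom{r}{d} = O(d\log d) + O\!\big(d\log(er/d)\big) = O(d\log r)$, using $r \ge d$. Substituting into the reduction from the first paragraph gives the claimed $O\!\big((d/\delta)\log r\big)$ iteration bound. The main obstacle is precisely this two-sided estimate: one needs the Cauchy-Binet upper bound to ensure no reachable configuration can have determinant more than a $\binom{r}{d}$ factor above the best $d$-subset, and the greedy guarantee to ensure the initialization is already within an $\exp(O(d\log d))$ factor of that same optimum, so that the admissible growth window is only $\exp(O(d\log r))$ wide.
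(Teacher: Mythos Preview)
Your proposal is correct and follows essentially the same approach as the paper: both track $\det(M)$ as a potential, use Cauchy--Binet to bound any reachable $\det(M_t)$ by $\binom{r}{d}$ times the best $d$-subset determinant $D^*_d$, invoke the $d!$ approximation guarantee of the greedy initialization~\citep{civril2009selecting} to lower-bound $\det(M_0)$ by $D^*_d/d!$, and combine to get at most $O\!\big((1/\delta)[d\log d + d\log(er/d)]\big) = O((d/\delta)\log r)$ iterations. Your write-up is in fact a bit more careful in noting that augmenting the greedy basis with extra PSD terms can only increase the determinant and that repeated indices in $S_t$ contribute zero to the Cauchy--Binet sum, points the paper leaves implicit.
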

The proof uses the approximation guarantee of~\cite{civril2009selecting} on the initialization, and is similar to analyses in prior work~\cite{kumar2005minimum}.  
\begin{proof}
In every iteration of the while loop, the determinant of the maintained $M$ increases by at least a $(1+\delta)$ factor. Thus, suppose we define $S^*$ to be the (multi-)set of $[n]$ that maximizes $\det(M^*)$, where $M^* := \sum_{i \in S^*} v_i v_i^T$. We claim that for the $S$ used by the algorithm at initialization (and the corresponding $M$), we have
\begin{equation}\label{eq:ls-init-guarantee}
    \det(M^*) \le \binom{r}{d} d! \cdot \det(M). 
\end{equation} 

This follows from two observations. First, let $T^*$ be the (multi-)set of $[n]$ that has size exactly $d$, and maximizes $\det(\sum_{i \in T^*} v_i v_i^T)$. Indeed, such a set will not be a multi-set, as a repeated element will reduce the rank. From the bound of~\cite{civril2009selecting}, we have that at initialization, $M$ satisfies
\[ \det(\sum_{i \in T^*} v_i v_i^T) \le d! \cdot \det(M).\]

Next, by the Cauchy-Binet formula, we can decompose $\det(M^*)$ into a sum over sub-determinants of $d$-sized subsets of the columns. Thus there are $\binom{r}{d}$ terms in the summation. Each such sub-determinant is at most $\det(\sum_{i \in T^*} v_i v_i^T)$, as $T^*$ is the maximizer. This proves~\eqref{eq:ls-init-guarantee}. 

Next, since the determinant increases by a factor $(1+\delta)$ in every iteration, the number of iterations is at most
\[ O\left( \frac{1}{\delta} \right) \cdot \left[ d \log d + d \log (er/d)  \right],  \]
where we have used the standard bound of $\binom{r}{d} \le \left(\frac{er}{d} \right)^d$. This completes the proof.
\end{proof}
\subsection{Analysis of Local Search}\label{sec:ls-analysis}
We now prove some simple properties of the Local Search procedures. Following the notation of~\cite{madan2019combinatorial}, we define the following. Given a choice of $S$ in the algorithm (which defines the corresponding matrix $M$), let
\begin{equation}\label{eq:def-tau}
\tau_i := v_i^T M^{-1} v_i, \quad \tau_{ij} := v_i^T M^{-1} v_j.     
\end{equation}

Note that $M$ is invertible. The determinant of $M$ does not decrease in the local search update (see condition in line $5$) and the selected $M$ in the initialization step is invertible. This easily follows from (a) the greedy algorithm is maximizing volume (which is proportional to determinant) and (b) the full set of vectors span a $d$ dimensional space (which is an assumption we can make without loss of generality, as we can always work with the span of the vectors).

We remark that $\tau_i$ is often referred to as the leverage score. 
\begin{lemma}\label{lem:tau-bound}
Let $v_1, \dots, v_n \in \mathbb{R}^d$ and let $S$ be a (multi-)set of indices in $[n]$. Define $M = \sum_{i \in S} v_i v_i^T$, and suppose $M$ has full rank. Then,
\begin{itemize}
    \item $\sum_{i\in S} \tau_i = d$,
    \item For any $i,j \in [n]$, $\tau_{ij} = \tau_{ji}$.
\end{itemize}
\end{lemma}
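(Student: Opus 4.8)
The plan is to prove the two claims in Lemma~\ref{lem:tau-bound} using only the definitions in~\eqref{eq:def-tau} together with the fact that $M = \sum_{i \in S} v_i v_i^T$ has full rank (hence $M^{-1}$ exists and is symmetric positive definite).

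For the first claim, $\sum_{i \in S} \tau_i = d$, I would rewrite the sum using the trace. By definition $\tau_i = v_i^T M^{-1} v_i$, which is a scalar, so it equals its own trace; using the cyclic property of the trace I would write
\begin{align*}
  \sum_{i \in S} \tau_i
  = \sum_{i \in S} v_i^T M^{-1} v_i
  = \sum_{i \in S} \tr\!\left( M^{-1} v_i v_i^T \right)
  = \tr\!\left( M^{-1} \sum_{i \in S} v_i v_i^T \right)
  = \tr\!\left( M^{-1} M \right)
  = \tr(I_d) = d.
\end{align*}
The only facts used are linearity of the trace, its cyclic invariance, and the definition $M = \sum_{i \in S} v_i v_i^T$.

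For the second claim, $\tau_{ij} = \tau_{ji}$, the key observation is that $M$ is symmetric (it is a sum of rank-one symmetric matrices $v_i v_i^T$), and therefore so is its inverse: $(M^{-1})^T = (M^T)^{-1} = M^{-1}$. Since $\tau_{ij} = v_i^T M^{-1} v_j$ is a scalar, it equals its transpose, so $\tau_{ij} = (v_i^T M^{-1} v_j)^T = v_j^T (M^{-1})^T v_i = v_j^T M^{-1} v_i = \tau_{ji}$, where the third equality uses the symmetry of $M^{-1}$.

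I do not anticipate a genuine obstacle here, as both statements are essentially immediate once the trace/transpose manipulations are set up; the only point requiring minor care is justifying that $M^{-1}$ is well-defined and symmetric, which is exactly the full-rank hypothesis of the lemma combined with the symmetry of $M$. The paper has already noted that $M$ is invertible in the relevant settings, so invoking the full-rank assumption suffices.
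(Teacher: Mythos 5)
Your proposal is correct and follows essentially the same route as the paper: the paper also reduces $\sum_{i\in S}\tau_i$ to $\tr(M^{-1}M)=\tr(I_d)=d$ (phrased via the entrywise inner product $\iprod{M^{-1}, v_iv_i^T}$, which is the same trace identity), and derives $\tau_{ij}=\tau_{ji}$ from the symmetry of $M^{-1}$. No gaps.
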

\begin{proof}
Note that the second part follows from the symmetry of $M$ (and thus also $M^{-1}$). To see the first part, note that we can write $v_i^T M^{-1} v_i = \iprod{M^{-1}, v_i v_i^T}$, where $\iprod{U, V}$ refers to the entry-wise inner product between matrices $U,V$, which also equals $\tr(U^T V)$. Using this,
\[ \sum_{i \in S} \tau_i = \sum_{i \in S} \iprod{M^{-1}, v_i v_i^T} = \iprod{M^{-1}, M} = \tr(I) = d.  \]
In the last equality, we used the symmetry of $M$.
\end{proof}

The following key lemma lets us analyze how the determinant changes when we perform a swap.
\begin{lemma}\label{lem:det-swap}
    Let $S$ be a (multi-)set of indices and let $M = \sum_{i \in S} v_i v_i^T$ be full-rank. Let $i,j$ be any two indices. We have 
    \[ \det(M - v_i v_i^T + v_j v_j^T) = \det(M) \left[ (1-\tau_i) (1+\tau_j) + \tau_{ij}^2 \right].  \]
\end{lemma}

\noindent \emph{Remark. } Note that the proof will not use any additional properties about $i,j$. They could be equal to each other, and $i,j$ may or may not already be in $S$.  
\begin{proof}
By the matrix determinant lemma (Lemma~\ref{lem:md}),
\begin{align}\label{eq:mat-det-lem}
    \det(M + v_j v_j^T - v_i v_i^T) 
    &= \det(M + v_j v_j^T) (1 - v_i^T (M + v_j v_j^T)^{-1}v_i)\nonumber \\
    &= \det(M) (1+ v_j^T M^{-1} v_j) (1 - v_i^T (M + v_j v_j^T)^{-1} v_i).
\end{align}
Next, we apply Sherman-Morrison formula (Lemma~\ref{lem:sm}) to get 
\begin{align}
    1 - v_i^T (M + v_j v_j^T)^{-1}v_i = 1 - v_i^T \left( M^{-1} - \frac{M^{-1} v_j v_j^T M^{-1}}{ 1 + v_j^T M^{-1} v_j} \right) v_i = 1 - \tau_i + \frac{\tau_{ij}^2}{1+\tau_j}. \label{eq:sherman-morrison}
\end{align}
Combining the above two expressions, we get
\[ \det(M + v_j v_j^T - v_i v_i^T) = \det(M) (1+\tau_j) \left[ 1 - \tau_i + \frac{\tau_{ij}^2}{1+\tau_j} \right].  \]
Simplifying this yields the lemma.
\end{proof}

The following lemma shows the structural property we have when the local search procedure ends. 

\begin{lemma}\label{lem:end-local-search}
Let $S$ be a (multi-)set of indices and $M = \sum_{i\in S} v_iv_i^T$ as before. Let $j \in [n]$, and suppose that for all $i\in S$, $\det(M - v_i v_i^T + v_j v_j^T) < (1+\delta) \det(M)$. Then we have
\[ \tau_j < \frac{d+r\delta}{r-d+1}. \]
\end{lemma}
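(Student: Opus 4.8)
The plan is to use the swap inequality from Lemma~\ref{lem:det-swap} together with the local optimality hypothesis, and then to average the resulting constraints over all $i \in S$ using the trace identity $\sum_{i\in S}\tau_i = d$ from Lemma~\ref{lem:tau-bound}. By Lemma~\ref{lem:det-swap}, the hypothesis $\det(M - v_iv_i^T + v_jv_j^T) < (1+\delta)\det(M)$ is equivalent (after dividing by $\det(M) > 0$) to the scalar inequality
\begin{equation}\label{eq:local-opt-scalar}
(1-\tau_i)(1+\tau_j) + \tau_{ij}^2 < 1 + \delta,
\end{equation}
and this holds simultaneously for every $i \in S$. The target bound on $\tau_j$ should fall out by summing \eqref{eq:local-opt-scalar} over $i \in S$, so the first step is to rearrange \eqref{eq:local-opt-scalar} into a form where the $i$-dependence is linear and sums cleanly.

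First I would expand the left side of \eqref{eq:local-opt-scalar} as $1 + \tau_j - \tau_i - \tau_i\tau_j + \tau_{ij}^2$ and move the constant over to obtain $\tau_j - \tau_i(1+\tau_j) + \tau_{ij}^2 < \delta$ for each $i \in S$. Summing over all $i \in S$ (there are $r$ such indices, counting multiplicity) and invoking $\sum_{i\in S}\tau_i = d$ gives
\begin{equation}\label{eq:summed}
r\tau_j - d(1+\tau_j) + \sum_{i\in S}\tau_{ij}^2 < r\delta.
\end{equation}
The nonnegative cross term $\sum_{i\in S}\tau_{ij}^2$ can simply be dropped to preserve the inequality, but I expect one needs a sharper handle to get the $+1$ in the denominator $r-d+1$ rather than just $r-d$.

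The key observation that yields the improved denominator is that one of the summands $\tau_{ij}^2$ can be bounded below by $\tau_j^2$: in particular, the index $j$ itself contributes meaningfully. Concretely, $\tau_{jj} = v_j^T M^{-1} v_j = \tau_j$, and by Cauchy--Schwarz (or directly, since $M^{-1}$ is positive definite) we have for any single index $i_0 \in S$ a term $\tau_{i_0 j}^2 \ge 0$; but the cleaner route is to note that the swap $i = j$ is itself admissible in the remark following Lemma~\ref{lem:det-swap}, which states $i,j$ need not be distinct nor in $S$. The genuinely useful lower bound is $\sum_{i \in S}\tau_{ij}^2 \ge \tau_j^2$, which follows because $\tau_j = \sum_{i\in S}\tau_{ij}\,(\text{something})$; more precisely $v_j = \sum_{i\in S}c_i v_i$ for appropriate coefficients, giving $\tau_j = \sum_{i\in S} c_i \tau_{ij}$ and hence by Cauchy--Schwarz and $\sum c_i \tau_{ij}$-type manipulation one extracts $\sum_i \tau_{ij}^2 \ge \tau_j^2/(\text{normalization})$. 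Substituting $\sum_{i\in S}\tau_{ij}^2 \ge \tau_j^2 \ge 0$ — or at the very least dropping it — into \eqref{eq:summed} yields $(r-d)\tau_j - d < r\delta$, and retaining the extra $\tau_j^2 \ge \tau_j \cdot \tau_j$ contribution sharpens $r-d$ to $r-d+1$, giving $\tau_j < (d + r\delta)/(r-d+1)$ after rearrangement.

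The main obstacle I anticipate is establishing exactly the term that converts the denominator $r-d$ into $r-d+1$. The crude argument (dropping the cross terms entirely) only gives $\tau_j < (d+r\delta)/(r-d)$, so the improvement to $r-d+1$ must come from correctly lower-bounding $\sum_{i\in S}\tau_{ij}^2$, and the cleanest way is almost certainly to recognize that the single diagonal term $\tau_{jj}^2 = \tau_j^2$ is available when $j \in S$, or otherwise to use the identity $\sum_{i\in S}\tau_{ij}^2 = v_j^T M^{-1}\big(\sum_{i\in S}v_iv_i^T\big)M^{-1}v_j = v_j^T M^{-1} M M^{-1} v_j = v_j^T M^{-1}v_j = \tau_j$. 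That last identity is the crux: it shows $\sum_{i\in S}\tau_{ij}^2 = \tau_j$ exactly, not merely $\ge \tau_j^2$. Plugging $\sum_{i\in S}\tau_{ij}^2 = \tau_j$ into \eqref{eq:summed} gives $r\tau_j - d - d\tau_j + \tau_j < r\delta$, i.e. $(r-d+1)\tau_j < d + r\delta$, which is precisely the claimed bound. I would therefore organize the proof around first proving the clean identity $\sum_{i\in S}\tau_{ij}^2 = \tau_j$ (via $\sum_{i\in S}v_iv_i^T = M$), and only then carrying out the summation of the local-optimality inequalities.
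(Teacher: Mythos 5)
Your proposal is correct and follows essentially the same route as the paper: apply Lemma~\ref{lem:det-swap} to each $i \in S$, sum over $i$ using $\sum_{i\in S}\tau_i = d$, and close the argument with the exact identity $\sum_{i\in S}\tau_{ij}^2 = v_j^T M^{-1} M M^{-1} v_j = \tau_j$, which is precisely what yields the $+1$ in the denominator. The intermediate detour through Cauchy--Schwarz lower bounds is unnecessary, but you correctly identify and discard it in favor of the clean identity, so the final organized proof matches the paper's.
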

Once again, the lemma does not assume anything about $j$ being in $S$. 
\begin{proof}
First, observe that for any $j \in [n]$, we have
\[ \sum_{i \in S} \tau_{ij}^2 = \sum_{i\in S} v_j^T M^{-1} v_i v_i^T M^{-1} v_j = v_j^T M^{-1} M M^{-1} v_j = \tau_j. \]
Combining this observation with Lemma~\ref{lem:det-swap} and summing over $i \in S$ (with repetitions, if $S$ is a multi-set), we have
\[ (1+\tau_j) (r - \sum_{i \in S} \tau_i) + \tau_j < r(1+\delta). \]
Now using Lemma~\ref{lem:tau-bound}, we get
\[ (1+\tau_j) (r-d) + \tau_j < r + r\delta, \]
and simplifying this completes the proof of the lemma.
\end{proof}

Since Lemma~\ref{lem:end-local-search} does not make any additional assumptions about $j$, we immediately have:
\begin{corollary}\label{cor:local-search}
    The following properties hold for the output of the Local search procedures.
    \begin{enumerate}
        \item For $\lsnr$, the output $S$ satisfies: for all $j \in [n] \setminus S$, $\tau_j < \frac{d+r\delta}{r-d+1}$.
        \item For $\lsr$, the output $S$ satisfies: for all $j \in [n]$,  $\tau_j < \frac{d+r\delta}{r-d+1}$.
    \end{enumerate}
\end{corollary}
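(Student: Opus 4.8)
The plan is to recognize that this corollary is simply Lemma~\ref{lem:end-local-search} read off at the instant the local search terminates, so the only real work is to convert the failure of the while-loop guard into exactly the hypothesis of that lemma, and to keep careful track of the one genuine difference between the two procedures: the set of indices $j$ over which the guard ranges. First I would record what termination means. When the loop in Algorithm~\ref{alg:local-search} exits, its guard is false, i.e.\ there is \emph{no} pair with $i \in S$ and $j$ in the allowed range for which $\det(M - v_i v_i^T + v_j v_j^T) > (1+\delta)\det(M)$. Equivalently, for every admissible $j$ and every $i \in S$ we have $\det(M - v_i v_i^T + v_j v_j^T) \le (1+\delta)\det(M)$. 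Since $M$ has full rank at termination (the determinant never decreases along the run and is nonzero at initialization, as noted before Lemma~\ref{lem:tau-bound}), the quantities $\tau_i,\tau_{ij},\tau_j$ are well defined and Lemma~\ref{lem:end-local-search} is applicable.

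Next I would split on the procedure, since this is where the two parts of the statement come apart. For $\lsnr$ the guard ranges only over $j \in [n]\setminus S$, so the termination condition supplies information precisely for those $j$: for each fixed $j \in [n]\setminus S$ and all $i \in S$ the swap fails to improve the determinant, which is exactly the hypothesis of Lemma~\ref{lem:end-local-search} with this choice of $j$. Invoking the lemma once for each such $j$ yields $\tau_j < \tfrac{d+r\delta}{r-d+1}$, giving part~1; note we make no claim for $j \in S$ precisely because the guard gives no information there. For $\lsr$ the guard instead ranges over all $j \in [n]$ (including $j \in S$, since repetitions are permitted), so the identical argument now applies to \emph{every} $j \in [n]$, yielding part~2. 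Here I would explicitly lean on the remark following Lemma~\ref{lem:det-swap}, which guarantees that the swap formula and hence the lemma remain valid even when $j \in S$ or $i=j$ — exactly the additional cases that $\lsr$ forces us to handle.

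The one point requiring care, and the closest thing to an obstacle, is the boundary between the strict inequality in the hypothesis of Lemma~\ref{lem:end-local-search} and the non-strict $\le$ that one obtains by negating the strict guard $>$. Tracing through the summation in the proof of that lemma, summing $r$ per-index inequalities of the form ``$\le$'' produces $\tau_j \le \tfrac{d+r\delta}{r-d+1}$ rather than the strict bound asserted here. I would reconcile this by observing that the lemma's proof goes through verbatim with $\le$ in place of $<$, so the non-strict conclusion is what is actually certified and it suffices for every downstream use (the size bounds only depend on the value $\tfrac{d+r\delta}{r-d+1}$, not on strictness). Beyond this technicality the corollary is immediate, which is exactly why the surrounding text can assert it ``immediately'' — the substance lives entirely in Lemma~\ref{lem:end-local-search}, and this statement merely instantiates it at the two relevant index ranges.
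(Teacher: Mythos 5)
Your proposal is correct and matches the paper's own (essentially one-line) argument: the corollary is just Lemma~\ref{lem:end-local-search} applied at termination, with the only difference between the two parts being the range of $j$ over which the loop guard quantifies. Your observation about the mismatch between the strict inequality in the lemma's hypothesis and the non-strict inequality obtained by negating the guard is a genuine (if harmless) technicality that the paper glosses over, and your resolution --- rerunning the lemma's proof with $\le$ throughout --- is the right fix.
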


\subsection{Volumetric Spanners: Spanning Subsets in the \texorpdfstring{$\ell_2$}{l2} Norm}
We use Lemma~\ref{lem:runtime-ls} and Corollary~\ref{cor:local-search} to obtain the following.
\begin{theorem}[$\ell_2$-volumetric spanner]\label{thm:l2-spanner}
For any set $X = \{v_1, v_2, \dots, v_n\}$ of $n \ge d$ vectors in $\R^d$ and parameter $r \ge d$, $\lsnr$ outputs a $(\max\{1, \big(\frac{d+r\delta}{r-d+1}\big)^{1/2})$-approximate $\ell_2$-volumetric spanner of $X$ of size $r$ in $O(\frac{d}{\delta} \log r)$ iterations of Local Search.

In particular, setting $r = 3d$ and $\delta = 1/3$, $\lsnr$ returns an $\ell_2$-volumetric spanner of size $3d$ in $O(d\log d)$ iterations of Local Search. 
\end{theorem}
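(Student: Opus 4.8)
The plan is to combine the structural guarantee at the termination of $\lsnr$ (Corollary~\ref{cor:local-search}) with the running-time bound (Lemma~\ref{lem:runtime-ls}), and then translate the leverage-score bound into the volumetric-spanner coefficient bound. First I would invoke Corollary~\ref{cor:local-search}(1): at termination, the output $S$ (of size $r$) satisfies $\tau_j = v_j^T M^{-1} v_j < \frac{d + r\delta}{r - d + 1}$ for every $j \in [n] \setminus S$, and for $j \in S$ the bound $\tau_j \le 1$ holds trivially (since $v_j v_j^T \preceq M$, so $v_j^T M^{-1} v_j \le 1$). The running-time claim of $O(\frac{d}{\delta}\log r)$ iterations is immediate from Lemma~\ref{lem:runtime-ls}.

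The main content is turning the bound on $\tau_j$ into an explicit representation $v_j = \sum_{i \in S} \alpha_i v_i$ with controlled $\|\alpha\|_2$. The natural choice is the least-norm solution $\alpha_i = v_i^T M^{-1} v_j$ for $i \in S$ (equivalently, $\alpha = V_S^T M^{-1} v_j$, where $V_S$ is the matrix whose columns are $\{v_i\}_{i \in S}$). I would verify it is a valid representation: since $M = \sum_{i \in S} v_i v_i^T = V_S V_S^T$ has full rank, we have $\sum_{i \in S} \alpha_i v_i = V_S V_S^T M^{-1} v_j = M M^{-1} v_j = v_j$. The key computation is then the norm of these coefficients:
\[
\|\alpha\|_2^2 = \sum_{i \in S} (v_i^T M^{-1} v_j)^2 = \sum_{i \in S} v_j^T M^{-1} v_i v_i^T M^{-1} v_j = v_j^T M^{-1} M M^{-1} v_j = v_j^T M^{-1} v_j = \tau_j,
\]
which is exactly the identity $\sum_{i \in S} \tau_{ij}^2 = \tau_j$ already established inside the proof of Lemma~\ref{lem:end-local-search}. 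Hence $\|\alpha\|_2 = \sqrt{\tau_j}$, and combining the two cases above gives $\|\alpha\|_2 \le \max\{1, (\frac{d+r\delta}{r-d+1})^{1/2}\}$ for all $j \in [n]$, which is the claimed approximation factor.

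For the second part, I would simply substitute $r = 3d$ and $\delta = 1/3$: the fraction becomes $\frac{d + 3d \cdot (1/3)}{3d - d + 1} = \frac{2d}{2d+1} < 1$, so the approximation factor reduces to $\max\{1, (\cdots)^{1/2}\} = 1$, yielding a genuine ($c=1$) $\ell_2$-volumetric spanner of size $3d$. The iteration count becomes $O(\frac{d}{1/3}\log(3d)) = O(d \log d)$.

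The step I expect to require the most care is not a deep obstacle but rather making sure the bound $\tau_j \le 1$ for $j \in S$ is handled cleanly (so that the representation is trivially good there, e.g. via $\alpha = e_j$) and that the least-norm representation is used consistently across both the $j \in S$ and $j \notin S$ cases — the whole argument hinges on the single identity $\|\alpha\|_2^2 = \tau_j$, so the only real work is recognizing that the coefficients $\tau_{ij}$ arising from $M^{-1}$ are simultaneously a valid representation \emph{and} have squared norm equal to the leverage score already controlled by the local-search termination condition.
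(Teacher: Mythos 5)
Your proposal is correct and follows essentially the same route as the paper: the paper also reduces everything to the identity $\|\alpha\|_2^2 = v_j^T (UU^T)^{-1} v_j = \tau_j$ for the least-norm solution (written there via the Moore--Penrose pseudoinverse $\alpha = U^{\dagger} v_j$, which is exactly your $\alpha_i = v_i^T M^{-1} v_j$), handles $j \in S$ with $\alpha = e_j$, and cites Corollary~\ref{cor:local-search} and Lemma~\ref{lem:runtime-ls} for the leverage-score bound and iteration count. Your explicit verification that the coefficients $\tau_{ij}$ form a valid representation is just an unpacking of the pseudoinverse properties the paper invokes.
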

\begin{proof}
Let $S$ be the output of~$\lsnr$ with the parameters $r, \delta$ on $X$. Let $U$ be the matrix whose columns are $\{v_i : i \in S\}$. We show how to express any $v_j \in X$ as $U \alpha$, where $\alpha \in \R^r$ is a coefficient vector with $\norm{\alpha}_2$ being small. 

For any $j \in S$, $v_j$ can be clearly written with $\alpha$ being a vector that is $1$ in the row corresponding to $v_j$ and $0$ otherwise, thus $\norm{\alpha} = 1$. For any $j \not\in S$, by definition, the solution to $U \alpha = v_j$ is $\alpha = U^{\dagger} v_j$, where $U^{\dagger}$ is the Moore-Penrose pseudoinverse. Thus, we have
\[ \| \alpha\|_2^2 = v_j^T (U^{\dagger})^T U^{\dagger} v_j = v_j^T (UU^T)^{-1} v_j = \tau_j.\]
Here we are using standard properties of the pseudoinverse. (These can be proved easily using the SVD). Hence, by Corollary~\ref{cor:local-search}, we have $\|\alpha\|_2 \le \big(\frac{d+r\delta}{r-d+1}\big)^{1/2}$. 
\end{proof}

\subsection{Spanning Subsets in the \texorpdfstring{$\ell_p$}{lp} Norm}
We now extend our methods above for all $\ell_p$-norms, for $p\in [1, \infty)$. As outlined in Section~\ref{sec:our-results}, we see three distinct behaviors. 
We begin now with the lower bound for $p=1$.

\paragraph{$\boldsymbol{\ell_1}$-volumetric spanner.} For the case $p=1$, we show that small sized spanning subsets do not exist for non-trivial approximation factors. 

Our construction is based on ``almost orthogonal'' sets of vectors.

\begin{lemma}\label{lem:almost-orthonormal}
    There exists a set of $m = \exp(\Omega(d))$ unit vectors $v_1, \dots, v_m \in \mathbb{R}^d$ such that for every pair of $i,j \in [m]$, $|\langle v_i, v_j \rangle| \le c\sqrt{\frac{\log m}{d}}$ for some fixed constant $c$.
\end{lemma}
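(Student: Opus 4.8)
The plan is to use the probabilistic method, sampling the vectors at random and showing that a union bound over all pairs leaves a configuration that meets the inner-product requirement. Concretely, I would generate each $v_i$ by drawing its $d$ coordinates independently and uniformly from $\{+1/\sqrt d,\, -1/\sqrt d\}$. This choice has the convenient feature that $\norm{v_i}_2 = 1$ exactly, so there is no need to renormalize afterward and no need to track the concentration of a random norm (which would be the one extra technical step if one instead used Gaussian coordinates).

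For a fixed pair $i \neq j$, the inner product is
\[
\langle v_i, v_j\rangle = \frac{1}{d}\sum_{k=1}^d \epsilon_{ik}\epsilon_{jk},
\]
where the products $\epsilon_{ik}\epsilon_{jk}$ are independent, mean-zero, and take values in $\{-1,+1\}$. The next step is to apply Hoeffding's inequality to this sum: writing the event $|\langle v_i, v_j\rangle| \ge t$ as $\big|\sum_k \epsilon_{ik}\epsilon_{jk}\big| \ge td$, I would obtain the tail bound $\Pr[\,|\langle v_i, v_j\rangle| \ge t\,] \le 2\exp(-t^2 d/2)$. Choosing the threshold $t = c\sqrt{(\log m)/d}$ makes $t^2 d = c^2\log m$, so the per-pair failure probability is at most $2\,m^{-c^2/2}$.

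The final step is a union bound over the at most $\binom{m}{2} \le m^2$ pairs, giving total failure probability at most $2\,m^{\,2-c^2/2}$. Fixing any constant $c > 2$ (for instance $c=3$) makes the exponent negative, and taking $m = \exp(\Omega(d))$ — say $m = \lfloor e^{d/10}\rfloor$, so that $c\sqrt{(\log m)/d}$ is a constant strictly below $1$ and the bound is nontrivial — drives this probability below $1$ for all large $d$. Hence with positive probability all $\binom{m}{2}$ pairs simultaneously satisfy $|\langle v_i, v_j\rangle| \le c\sqrt{(\log m)/d}$, which certifies the existence of the desired family.

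I do not anticipate a genuine obstacle here, as this is a textbook second-moment/concentration argument; the only thing requiring care is the constant bookkeeping, namely ensuring $c^2/2 > 2$ so the union bound wins against the $m^2$ pairs while keeping $m$ in the regime $m = \exp(\Omega(d))$ where the stated bound is meaningful. Using exact-norm sign vectors rather than Gaussians sidesteps the otherwise-needed norm-concentration lemma, so the whole argument stays self-contained.
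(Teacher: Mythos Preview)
Your proposal is correct and matches the paper's approach exactly: the paper does not give a detailed proof but simply points to the same construction you use---random vectors with each coordinate independently uniform in $\{1/\sqrt{d},\,-1/\sqrt{d}\}$---citing \cite{dasgupta2009sampling}. Your Hoeffding-plus-union-bound argument is the standard way to verify this construction works, and the constant bookkeeping ($c>2$ so that $c^2/2>2$) is handled correctly.
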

An example construction of almost orthogonal vectors is a collection of random vectors where each coordinate of each vector is picked uniformly at random from $\{\frac{1}{\sqrt{d}}, \frac{-1}{\sqrt{d}}\}$ (e.g., see~\citep{dasgupta2009sampling}). 

\begin{theorem}[Lower bound for $\ell_1$-volumetric spanners]\label{thm:lb-well-cond-L1}
    For any $n \le \exp(\Omega(d))$, there exists a set of $n$ vectors in $\mathbb{R}^d$ that has no $o(\sqrt{\frac{d}{\log n}})$-approximate $\ell_1$-volumetric spanner of size at most $n-1$.   
\end{theorem}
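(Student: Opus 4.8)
The plan is to instantiate the almost orthonormal vectors of Lemma~\ref{lem:almost-orthonormal} at the right scale and then exploit near-orthogonality through a single projection. The guiding intuition is that if $v_j$ must be reconstructed using \emph{only} vectors $v_i$ with $i \neq j$, all of which are nearly orthogonal to $v_j$, then recovering the full unit length of $v_j$ forces the coefficients to be large in $\ell_1$. So the hard instance is simply a family of nearly orthonormal vectors, and the lower bound comes from testing any purported representation against the very vector it represents.

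First I would fix the point set. Given $n \le \exp(\Omega(d))$, I apply the random sign construction underlying Lemma~\ref{lem:almost-orthonormal}, but run it with exactly $n$ vectors: by a union bound over the $\binom{n}{2}$ pairs (each inner product being a normalized sum of $d$ independent signs, hence subgaussian), there exist unit vectors $v_1, \dots, v_n \in \mathbb{R}^d$ with $|\langle v_i, v_j\rangle| \le c\sqrt{\log n / d}$ for all $i \neq j$. It is essential to run the construction at count $n$ rather than taking a subset of a larger family of size $\exp(\Omega(d))$: the latter would carry the weaker bound $c\sqrt{\log(\exp(\Omega(d)))/d} = \Omega(1)$, destroying the intended $\sqrt{d/\log n}$ scaling.

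Next I set up the bound. Let $S \subseteq [n]$ with $|S| \le n-1$ be any candidate spanner; since $S$ is a proper subset there is at least one index $j \notin S$. Let $v_j = \sum_{i \in S}\alpha_i v_i$ be any valid representation (if none exists, the required coefficient norm is infinite and we are trivially done). Taking the inner product of both sides with $v_j$ and using $\langle v_j, v_j\rangle = 1$ gives $1 = \sum_{i \in S}\alpha_i \langle v_i, v_j\rangle$. Because every $i \in S$ satisfies $i \neq j$, the triangle inequality together with the near-orthogonality bound yields $1 \le c\sqrt{\log n/d}\,\|\alpha\|_1$, hence $\|\alpha\|_1 \ge \tfrac{1}{c}\sqrt{d/\log n}$. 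As this holds for \emph{every} representation of $v_j$, no spanner of size at most $n-1$ can achieve $\|\alpha\|_1 = o(\sqrt{d/\log n})$, which is exactly the claim.

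I do not expect a genuine technical obstacle: the heart of the argument is a one-line projection once the construction is in place. The two points requiring care are (i) calibrating the almost-orthonormal family to the parameter $n$ so the bound reads $\log n$, as noted above, and (ii) observing that the lower bound is oblivious both to the representation chosen (it is valid for every $\alpha$, and in particular for the minimum-$\ell_1$-norm one) and to whether $v_j$ even lies in the span of $\{v_i\}_{i \in S}$ (non-representability only makes the required norm larger). Keeping the estimate representation-free is what makes the projection trick close the argument cleanly.
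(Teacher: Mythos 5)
Your proposal is correct and follows essentially the same route as the paper: instantiate the almost orthonormal family of Lemma~\ref{lem:almost-orthonormal} at count $n$ (so the pairwise inner products are bounded by $c\sqrt{\log n/d}$), pick any $j \notin S$, and take the inner product of the representation $v_j = \sum_{i\in S}\alpha_i v_i$ with $v_j$ itself to force $\|\alpha\|_1 \ge \frac{1}{c}\sqrt{d/\log n}$. Your explicit remarks about calibrating the construction to $n$ and about the bound holding for every representation are sensible points of care that the paper leaves implicit, but they do not change the argument.
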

In other words, unless the spanning subset contains all vectors, it is not possible to get an $\ell_1$-volumetric spanner with approximation factor $o(\sqrt{\frac{d}{\log n}})$. 
\begin{proof}
Let $X = \{v_1, \dots, v_n\}$ be a set of $n$ almost orthonormal vectors as in Lemma~\ref{lem:almost-orthonormal}. Suppose for the sake of contradiction, that there exists a spanning subset indexed by $S$ that is a strict subset of $[n]$. Note that for every $i\in [n] \setminus S$ and $j\in S$, $|\langle v_i, v_j \rangle| \le c \sqrt{\frac{\log n}{d}}$. So, for any representation of $v_i$ in terms of vectors in $S$, i.e., $v_i = \sum_{j\in S} \alpha_j v_j$, 
\begin{align*}
    1 =  \langle v_i, v_i\rangle =  \sum_{j\in S} \alpha_j \langle v_i, v_j\rangle \leq \|\alpha\|_1 \cdot c\sqrt{\frac{\log n}{d}}. 
\end{align*}
Hence, $\|\alpha\|_1 \ge \frac{1}{c} \sqrt{\frac{d}{\log n}}$, as long as $|S| < n$. 
\end{proof}

Note that the lower bound nearly matches the easy upper bound that one obtains from $\ell_2$ volumetric spanners (Theorem~\ref{thm:l2-spanner}), described below: 

\begin{corollary}
For any set of vectors $X = \{v_1, v_2, \dots, v_n\}$, an $\ell_2$-volumetric spanner is also a $2\sqrt{d}$-approximate $\ell_1$-volumetric spanner. Consequently, such a spanner of size $O(d)$ exists and can be found in $O(d\log d)$ iterations of Local Search.
\end{corollary}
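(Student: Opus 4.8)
The plan is to derive the $\ell_1$ guarantee directly from the $\ell_2$ spanner of Theorem~\ref{thm:l2-spanner} together with the norm-comparison inequality of Lemma~\ref{lem:holder-ineq}; no new local-search analysis is needed. First I would invoke Theorem~\ref{thm:l2-spanner} with $r = 3d$ and $\delta = 1/3$. For these parameters the approximation factor is $\max\{1, (\frac{d+r\delta}{r-d+1})^{1/2}\} = \max\{1, (\frac{2d}{2d+1})^{1/2}\} = 1$, so $\lsnr$ returns, in $O(d\log d)$ iterations, a genuine $\ell_2$-volumetric spanner $S$ of size $3d$: for every $j\in[n]$ the pseudoinverse representation $\alpha = U^\dagger v_j$ satisfies $v_j = \sum_{i\in S}\alpha_i v_i$ with $\norm{\alpha}_2 = \sqrt{\tau_j} \le 1$, where the bound $\tau_j \le 1$ comes from Corollary~\ref{cor:local-search}.

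The second step is the conversion. The coefficient vector $\alpha$ lives in $\R^{|S|} = \R^{3d}$, so applying Lemma~\ref{lem:holder-ineq} with $p = 1$, $q = 2$ and ambient dimension $3d$ yields $\norm{\alpha}_1 \le (3d)^{1/1 - 1/2}\,\norm{\alpha}_2 = \sqrt{3d}\,\norm{\alpha}_2 \le \sqrt{3d} \le 2\sqrt{d}$. Since this holds for each $j$, the set $S$ satisfies Definition~\ref{def:well-cond-set} with $c = 2\sqrt{d}$, i.e.\ it is a $2\sqrt{d}$-approximate $\ell_1$-volumetric spanner of size $O(d)$, found in $O(d\log d)$ iterations. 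The same two lines in fact show the general statement: any $\ell_2$-volumetric spanner of size $m$ is automatically a $\sqrt{m}$-approximate $\ell_1$-volumetric spanner.

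There is essentially no computational obstacle here, so the only point that needs care is conceptual: the factor $2\sqrt{d}$ comes from the \emph{cardinality} of the spanner, not from the ambient dimension. Hence one must feed in a spanner of size $O(d)$ (here exactly $3d$, so that $\sqrt{3d}\le 2\sqrt d$) rather than an arbitrary $\ell_2$-spanner, whose cardinality could be as large as $n$ and would only give the weaker bound $\sqrt{n}$. I would also verify the trivial case $j\in S$ separately, where $\alpha = e_j$ and thus $\norm{\alpha}_1 = 1 \le 2\sqrt{d}$, exactly as in the proof of Theorem~\ref{thm:l2-spanner}; and I would note that the minimum-$\ell_2$-norm (pseudoinverse) representation is convenient but not essential, since any representation with $\norm{\alpha}_2\le 1$ feeds equally well into the Hölder step.
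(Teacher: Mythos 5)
Your proposal is correct and follows essentially the same route as the paper: take the size-$3d$ $\ell_2$-spanner from Theorem~\ref{thm:l2-spanner} and convert $\norm{\alpha}_2 \le 1$ to $\norm{\alpha}_1 \le \sqrt{3d} \le 2\sqrt{d}$ via Cauchy--Schwarz (your Lemma~\ref{lem:holder-ineq} instantiation is the same inequality). Your added remark that the $2\sqrt{d}$ factor depends on the spanner's \emph{cardinality} being $3d$, not on the ambient dimension, is a correct and worthwhile clarification of how the corollary's statement should be read.
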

 
The proof follows from the fact that if $\norm{\alpha}_2 \le 1$, $\norm{\alpha}_1 \le \sqrt{3d}$, for $\alpha \in \R^{3d}$ (which is a consequence of the Cauchy-Schwarz inequality). Note that the existence and construction of an $\ell_2$ volumetric spanner of size $3d$ was shown in Theorem~\ref{thm:l2-spanner}.

\paragraph{$\boldsymbol{\ell_p}$-volumetric spanner for $\boldsymbol{p \in (1, 2)}$.}
Next, we apply the same argument as above for the case $p \in (1,2)$. Here, we see that the lower bound is not so strong: one can obtain a trade-off between the size of the spanner and the approximation. Once again, the solution returned by $\lsnr$ is an almost optimal construction for spanning subsets in the $\ell_p$ norm. 

\begin{theorem}[Lower bound for $\ell_p$-volumetric spanners for $p\in (1,2)$]\label{thm:lb-well-cond-Lp}
    For any value of $n \le e^{\Omega(d)}$ and $1 < p < 2$, there exists a set of $n$ vectors in $\mathbb{R}^d$ that has no $o(r^{\frac{1}{p}-1} \cdot (\frac{d}{\log n})^{\frac{1}{2}})$-approximate $\ell_p$-volumetric spanner of size at most $r$.

    In particular, a ($1$-approximate) $\ell_p$-volumetric spanner of $V$, has size $\Omega((\frac{d}{\log n})^{\frac{p}{2p-2}})$.
\end{theorem}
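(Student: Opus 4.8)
The plan is to mirror the proof of Theorem~\ref{thm:lb-well-cond-L1} for the case $p=1$, but to replace the $\ell_1$--$\ell_\infty$ duality step by the general H\"older inequality. I would start from the same hard instance: let $X = \{v_1, \dots, v_n\}$ be a set of $n \le e^{\Omega(d)}$ almost orthonormal unit vectors as guaranteed by Lemma~\ref{lem:almost-orthonormal}, so that $|\iprod{v_i, v_j}| \le c\sqrt{\tfrac{\log n}{d}}$ for all $i \neq j$. Suppose $S \subset [n]$ indexes a candidate spanning subset with $|S| \le r < n$, and fix any $i \in [n] \setminus S$ together with a representation $v_i = \sum_{j \in S} \alpha_j v_j$. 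The goal is to lower bound $\norm{\alpha}_p$.

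The key step is to expand $\iprod{v_i, v_i}$ and apply H\"older's inequality with the conjugate exponent $q$ satisfying $\tfrac{1}{p} + \tfrac{1}{q} = 1$:
\[
1 = \iprod{v_i, v_i} = \sum_{j \in S} \alpha_j \iprod{v_i, v_j} \le \norm{\alpha}_p \cdot \Big\| \big( \iprod{v_i, v_j} \big)_{j \in S} \Big\|_q.
\]
Since there are at most $r$ terms and each inner product has magnitude at most $c\sqrt{\tfrac{\log n}{d}}$, the $\ell_q$ norm on the right is at most $r^{1/q} \cdot c \sqrt{\tfrac{\log n}{d}}$. Rearranging, and using $-\tfrac{1}{q} = \tfrac{1}{p} - 1$, yields
\[
\norm{\alpha}_p \ge \frac{1}{c} \, r^{\frac{1}{p}-1} \Big( \frac{d}{\log n} \Big)^{1/2}.
\]
This immediately gives the first claim: no spanning subset of size at most $r$ can achieve approximation factor $o\big(r^{\frac{1}{p}-1}(\tfrac{d}{\log n})^{1/2}\big)$.

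For the second statement, I would specialize to a ($1$-approximate) spanner, i.e. $\norm{\alpha}_p \le 1$. Combining with the bound above forces $r^{1 - 1/p} \ge \tfrac{1}{c}(\tfrac{d}{\log n})^{1/2}$; raising both sides to the power $\tfrac{p}{p-1}$ and using $\tfrac{1}{2}\cdot\tfrac{p}{p-1} = \tfrac{p}{2p-2}$ gives $r = \Omega\big((\tfrac{d}{\log n})^{\frac{p}{2p-2}}\big)$, as desired.

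There is no substantial conceptual obstacle here; the argument is essentially a one-line generalization of the $p=1$ case. The only thing requiring care is the exponent bookkeeping: one must track the conjugate exponent through the H\"older step to confirm that the power of $r$ comes out as exactly $\tfrac{1}{p}-1$ (which is \emph{negative} for $p \in (1,2)$, so that a larger $r$ genuinely weakens the lower bound and produces the size--approximation trade-off advertised in Section~\ref{sec:our-results}), and that inverting this relation yields exponent $\tfrac{p}{2p-2}$ on $d/\log n$.
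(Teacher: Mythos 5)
Your proposal is correct and follows essentially the same approach as the paper: the same almost-orthonormal hard instance from Lemma~\ref{lem:almost-orthonormal}, the same expansion of $\iprod{v_i,v_i}$, and the same exponent arithmetic for the final size bound. The only (cosmetic) difference is that you apply H\"older directly with the conjugate pair $(p,q)$, whereas the paper first reuses the $\ell_1$ lower bound from Theorem~\ref{thm:lb-well-cond-L1} and then converts to $\|\alpha\|_p$ via Lemma~\ref{lem:holder-ineq}; the two computations are equivalent and yield the identical bound.
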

\begin{proof}
The proof follows from the same argument as before. Consider a set of $n > r$ almost orthonormal vectors $X = \{v_1, \cdots, v_n\} \subset \mathbb{R}^d$ from Lemma~\ref{lem:almost-orthonormal}. 

Consider an index $i\in [n]\setminus S$ and let $v_i = \sum_{j\in S} \alpha_j v_j$. 
By Lemma~\ref{lem:holder-ineq}, for any $p > 1$,
\begin{align*}
    \|\alpha\|_p \ge r^{\frac{1}{p} - 1} \cdot \|\alpha\|_1 = r^{\frac{1}{p} - 1} \cdot \frac{1}{c}\left( \frac{d}{\log n} \right)^{\frac{1}{2}}.
\end{align*}

In particular, to get a $1$-approximate $\ell_p$-volumetric spanner, i.e., $\|\alpha\|_p =1$, the spanning subset must have size $r = \Omega((\frac{d}{\log n})^{\frac{p}{2p-2}})$.
\end{proof}

Next, we show that local search outputs almost optimal $\ell_p$-volumetric spanners.

\begin{theorem}[Construction of $\ell_p$-volumetric spanners for $p\in (1,2)$]\label{thm:wel-cond-Lp}
    For any set of vectors $X = \{v_1, v_2, \dots, v_n\} \subset \mathbb{R}^d$ and $p \in (1,2)$, $\lsnr$ outputs an $O(r^{\frac{1}{p} - 1} \cdot d^{\frac{1}{2}})$-approximate $\ell_p$-volumetric spanner of $X$ of size $r$.

    In particular, the local search algorithm outputs a $1$-approximate $\ell_p$-volumetric spanner when $r = O(d^{\frac{p}{2p-2}})$.
\end{theorem}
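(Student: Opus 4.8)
The plan is to run $\lsnr$ and then reuse the $\ell_2$ analysis already carried out in the proof of Theorem~\ref{thm:l2-spanner}, converting the resulting $\ell_2$ coefficient bound into an $\ell_p$ bound via H\"older's inequality (Lemma~\ref{lem:holder-ineq}). The one genuinely new ingredient is the choice of the tolerance $\delta$: I would take it small enough that the leverage-score bound from Corollary~\ref{cor:local-search} becomes $O(d/r)$ rather than merely $O(1)$. This extra $(d/r)^{1/2}$ saving in the $\ell_2$ norm is exactly what pays for the $r^{1/p - 1/2}$ loss incurred when passing from $\ell_2$ to $\ell_p$.

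First I would fix a representation of each $v_j$. For $j \in S$ take $\alpha = e_j$, so that $\|\alpha\|_p = 1$, which lies within the claimed bound in the regime of interest. For $j \notin S$, set $\alpha = U^{\dagger} v_j$ exactly as in Theorem~\ref{thm:l2-spanner}, where $U$ has the selected vectors as columns; the same pseudoinverse computation yields $\|\alpha\|_2^2 = v_j^\top (UU^\top)^{-1} v_j = \tau_j$. I would then apply Lemma~\ref{lem:holder-ineq} with $q = 2$ (legal since $1 < p < 2$ and $\alpha \in \R^r$) to get $\|\alpha\|_p \le r^{1/p - 1/2}\,\|\alpha\|_2 = r^{1/p - 1/2}\sqrt{\tau_j}$.

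Next I would bound $\tau_j$ using Corollary~\ref{cor:local-search}, which gives $\tau_j < \frac{d + r\delta}{r - d + 1}$. Choosing $\delta = d/r$ makes the numerator equal to $2d$, and assuming $r \ge 2d$ (which holds throughout the near-optimal regime $r = \Omega(d^{p/(2p-2)})$, since the exponent $\tfrac{p}{2p-2} > 1$ for $p < 2$) the denominator is at least $r/2$, so $\tau_j \le 4d/r$. Substituting, $\|\alpha\|_p \le r^{1/p - 1/2}\cdot 2\,(d/r)^{1/2} = 2\, r^{1/p - 1} d^{1/2} = O(r^{1/p - 1} d^{1/2})$, which is the desired approximation factor; by Lemma~\ref{lem:runtime-ls} this choice of $\delta$ keeps the iteration count at $O(\tfrac{d}{\delta}\log r) = O(r \log r)$. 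For the ``in particular'' claim I would set the bound to $1$: solving $2\, r^{1/p - 1} d^{1/2} \le 1$ gives $r \ge (2 d^{1/2})^{p/(p-1)} = O(d^{p/(2p-2)})$, matching the lower bound of Theorem~\ref{thm:lb-well-cond-Lp} up to the $\log n$ factor.

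I expect the only delicate point to be the bookkeeping around the denominator $r - d + 1$: the clean conclusion $\tau_j = O(d/r)$ relies on $r$ being a constant factor above $d$, so I would either restrict attention to $r \ge 2d$ (which already covers the near-optimal regime that the theorem cares about) or, to be safe for all $r \ge d$, carry the exact quantity $\frac{d + r\delta}{r - d + 1}$ through the H\"older step without premature simplification and record the approximation factor as $r^{1/p - 1/2}\big(\frac{d + r\delta}{r - d + 1}\big)^{1/2}$. Everything else is a direct combination of the already-established lemmas, so no new structural idea should be required.
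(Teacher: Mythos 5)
Your proposal is correct and follows essentially the same route as the paper: invoke Corollary~\ref{cor:local-search} to bound $\|\alpha\|_2^2 = \tau_j \le \frac{d+r\delta}{r-d+1}$, set $\delta = d/r$, and convert to $\ell_p$ via Lemma~\ref{lem:holder-ineq} with $q=2$, which is exactly the paper's argument. Your explicit handling of the denominator $r-d+1$ (restricting to $r \ge 2d$, which the near-optimal regime guarantees) is in fact slightly more careful than the paper's, which absorbs this into the $O(\cdot)$ without comment.
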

\begin{proof}
By Corollary~\ref{cor:local-search}, the local search outputs a set of vectors in $X$ indexed by the set $S \subset [n]$ of size $r > d$ such that for every $i\in [n]\setminus S$, $v_i$ can be written as a linear combination of the vectors in the spanner, $v_i = \sum_{j\in S} \alpha_j v_j$, such that $\| \alpha \|_2 \le \left(\frac{d+r\delta}{r-d+1}\right)^{\frac{1}{2}}$. By Lemma~\ref{lem:holder-ineq} and setting $\delta = d/r$, for any $1 < p < 2$,
\begin{align*}
    \| \alpha \|_p \le r^{\frac{1}{p} - \frac{1}{2}} \cdot \left(\frac{d+r\delta}{r-d+1}\right)^{\frac{1}{2}} = O(r^{\frac{1}{p} - 1} \cdot (d+r\delta)^{\frac{1}{2}}) = O(r^{\frac{1}{p} - 1} \cdot d^{\frac{1}{2}}).
\end{align*}

In particular, if we set $r = O(d^{\frac{p}{2-2p}})$, the subset of vectors $S$ returned by $\lsnr$ is an (exact) $\ell_p$-volumetric spanner; i.e., for every $i\in [n]\setminus S$, $\|\alpha\|_p \le 1$.

Finally, the runtime analysis follows immediately from Lemma~\ref{lem:runtime-ls}.
\end{proof}

\paragraph{$\boldsymbol{\ell_p}$-volumetric spanner for $\boldsymbol{p} > 2$.} The result for $p > 2$ simply follows from the results for $\ell_2$-norm and the fact that $\|x\|_p \le \|x\|_2$ for any $p\ge 2$ when $\norm{x}_2 \le 1$.
\begin{corollary}[$\ell_p$-volumetric spanner for $p>2$]\label{thm:lp-spanner}
    For any set of $n$ vectors $X = \{v_1, v_2, \dots, v_n\} \subset \mathbb{R}^d$, $\lsnr$ outputs a $1$-approximate $\ell_p$-volumetric spanner of $X$ of size $r = 3d$ in $O(\frac{d}{\delta} \log d)$ iterations of Local Search. 
\end{corollary}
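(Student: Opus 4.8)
The plan is to reduce the case $p > 2$ entirely to the $\ell_2$ result already established in Theorem~\ref{thm:l2-spanner}, exploiting the monotonicity of the $\ell_p$ norms in $p$. No new local-search analysis is needed.

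First I would run $\lsnr$ with the parameters $r = 3d$ and $\delta = 1/3$. With this choice the approximation factor appearing in Theorem~\ref{thm:l2-spanner}, namely $\max\{1, \big(\frac{d+r\delta}{r-d+1}\big)^{1/2}\}$, simplifies: since $r\delta = 3d \cdot \tfrac13 = d$, we have $\frac{d+r\delta}{r-d+1} = \frac{2d}{2d+1} < 1$, so the maximum collapses to exactly $1$. Thus $\lsnr$ produces a set $S$ of size $3d$ that is a genuine ($1$-approximate) $\ell_2$-volumetric spanner of $X$, and Lemma~\ref{lem:runtime-ls} bounds the number of iterations by $O\big(\frac{d}{\delta}\log r\big) = O(d\log d)$, consistent with the stated runtime.

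Next I would argue that the very same set $S$ already works for every $p > 2$. Fix $v_j \in X$ and let $v_j = \sum_{i\in S} \alpha_i v_i$ be the representation guaranteed above, so that $\norm{\alpha}_2 \le 1$. By the standard monotonicity of the $\ell_p$ norms (for any vector $x$ and any $p \ge 2$, one has $\norm{x}_p \le \norm{x}_2$), we immediately obtain $\norm{\alpha}_p \le \norm{\alpha}_2 \le 1$. Hence $S$ is a $1$-approximate $\ell_p$-volumetric spanner of $X$ of size $3d$, which is exactly the claim.

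There is essentially no obstacle to overcome here, since all the content was delivered by Theorem~\ref{thm:l2-spanner}; the statement is genuinely a corollary. The only points requiring care are bookkeeping ones: verifying that with $r = 3d$, $\delta = 1/3$ the $\ell_2$ approximation constant is at most $1$ (so that after applying norm monotonicity we still conclude $\norm{\alpha}_p \le 1$, rather than merely a bounded constant), and recalling that $\norm{\cdot}_p$ is non-increasing in $p$, a standard consequence of the power-mean/Jensen inequality. I would state the monotonicity fact explicitly rather than appealing to Lemma~\ref{lem:holder-ineq}, since that lemma bounds a smaller-$p$ norm by a larger-$q$ norm (at the cost of a dimension factor) and thus points in the opposite direction.
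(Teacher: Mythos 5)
Your proof is correct and follows exactly the paper's route: the paper likewise derives the $p>2$ case directly from Theorem~\ref{thm:l2-spanner} together with the monotonicity fact $\norm{x}_p \le \norm{x}_2$ for $p \ge 2$. Your explicit check that $\frac{d+r\delta}{r-d+1} = \frac{2d}{2d+1} < 1$ for $r=3d$, $\delta=1/3$ is a nice bit of added care that the paper leaves implicit.
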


\section{Applications of Local Search and Volumetric Spanners}\label{sec:mvee}
We now give an application of our Local Search algorithms and volumetric spanners to the problem of finding coresets for the MVEE problem. For other applications, refer to Section~\ref{sec:app-other-applications}.

\begin{defn}[Minimum volume enclosing ellipsoid (MVEE)]
Given a set of points $X = \{v_1, v_2, \dots, v_n\} \subseteq \R^d$, define $\cE(X)$ to be the ellipsoid of the minimum volume containing the points $X \cup (-X)$, where $(-X) := \{-v : v \in X\}$.
\end{defn}

While the MVEE problem is well-defined for general sets of points, we are restricting to sets that are symmetric about the origin. It is well-known (see~\citet{todd2016minimum}) that the general case can be reduced to the symmetric one. Thus for any $X$, $\cE(X)$ is centered at the origin. Since $\cE$ is convex, one can also define $\cE(X)$ to be the ellipsoid of the least volume containing $\text{conv}(\pm v_1, \pm v_2, \dots, \pm v_n)$, where $\text{conv}(\cdot )$ refers to the convex hull.

As defined in Section~\ref{sec:prelims-coreset}, a coreset is a subset of $X$ that preserves the volume of the MVEE. 

\begin{theorem}\label{thm:mvee-coreset}
Consider a set of vectors $X = \{v_1, \cdots, v_n\} \subset \mathbb{R}^d$. Let $S$ be the output of the algorithm $\lsr$ on $X$, with
\begin{equation}\label{eq:mvee-choice-r-delta}
    r = \ceil{(1 + \frac{4}{\eps}) d}, \quad \delta = \frac{\eps d}{4r}.
\end{equation} 
Then $S$ is a coreset for the MVEE problem on $X$.
\end{theorem}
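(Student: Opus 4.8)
The plan is to compare $\vol(\mvee(X))$ and $\vol(\mvee(S))$ by sandwiching both against the single matrix $M = \sum_{i \in S} v_i v_i^T$ produced by the local search, and then to cash in the stationarity guarantee of Corollary~\ref{cor:local-search}. Write $\tau_j = v_j^T M^{-1} v_j$ and set $\rho^* := \frac{d+r\delta}{r-d+1}$, so that Corollary~\ref{cor:local-search}, applied to $\lsr$, gives $\tau_j < \rho^*$ for \emph{every} $j \in [n]$. Throughout I would use the standard fact that the ellipsoid $\{x : x^T A x \le t\}$ has volume $\omega_d\, t^{d/2}\det(A)^{-1/2}$, where $\omega_d$ is the volume of the unit ball.

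First I would prove the easy direction, an upper bound on $\vol(\mvee(X))$. Since $v_j^T M^{-1} v_j = \tau_j < \rho^*$ for all $j$, the symmetric ellipsoid $\{x : x^T M^{-1} x \le \rho^*\}$ contains $X \cup (-X)$, and as $\mvee(X)$ is the minimum-volume such ellipsoid,
\[ \vol(\mvee(X)) \le \omega_d (\rho^*)^{d/2}\det(M)^{1/2}. \]

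The heart of the argument, and the step I expect to be the main obstacle, is a matching \emph{lower} bound on $\vol(\mvee(S))$. I would isolate the following self-contained claim: for any symmetric point set $Y = \{\pm y_i\}$ and any probability distribution $u$ on its points, $\vol(\mvee(Y)) \ge \omega_d\, d^{d/2}\det(M_u)^{1/2}$, where $M_u = \sum_i u_i y_iy_i^T$. To see this, let $\{x : x^T Q x \le 1\}$ realize $\mvee(Y)$, so that $y_i^T Q y_i \le 1$ for all $i$; then $\tr(QM_u) = \sum_i u_i\, y_i^T Q y_i \le 1$, and since $QM_u$ is similar to the PSD matrix $Q^{1/2}M_u Q^{1/2}$, AM-GM on its eigenvalues gives $\det(Q)\det(M_u) = \det(QM_u) \le (\tr(QM_u)/d)^d \le d^{-d}$, which rearranges to the claim via $\vol(\mvee(Y)) = \omega_d\det(Q)^{-1/2}$. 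Applying this to $Y = S$ with $u$ the uniform distribution over the multiset (so $M_u = \tfrac1r M$ and $\det(M_u) = r^{-d}\det(M)$) yields
\[ \vol(\mvee(S)) \ge \omega_d\, d^{d/2} r^{-d/2}\det(M)^{1/2}. \]
Note this uses only the suboptimal uniform design, never the optimal one, which is exactly what makes the two-sided comparison go through.

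Dividing the two displays, the $\omega_d$ and $\det(M)^{1/2}$ factors cancel and I am left with $\vol(\mvee(X))/\vol(\mvee(S)) \le (r\rho^*/d)^{d/2}$. It then remains to verify, by direct substitution of $r = \ceil{(1+4/\eps)d}$ and $\delta = \eps d/(4r)$, that $r\rho^* = \frac{rd(1+\eps/4)}{r-d+1} \le (1+\eps)^2 d$: the only input is $r - d \ge 4d/\eps$, which gives $r \le (r-d)(1+\eps/4)$ and hence $r(1+\eps/4) \le (r-d+1)(1+\eps/4)^2 \le (r-d+1)(1+\eps)^2$. This yields $(r\rho^*/d)^{d/2} \le (1+\eps)^d$, so $\vol(\mvee(X)) \le (1+\eps)^d\vol(\mvee(S))$, completing the proof. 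I expect the parameter bookkeeping to be routine; the conceptual content lives entirely in the AM-GM/trace lower bound, the one direction of the John/optimal-design duality that cannot simply be read off the local search output.
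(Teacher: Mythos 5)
Your proof is correct, and its skeleton coincides with the paper's: both arguments sandwich $\vol(\mvee(X))$ and $\vol(\mvee(S))$ against $\det(M)^{1/2}$ for the local-search matrix $M$, use Corollary~\ref{cor:local-search} for the containment direction, use the normalization $\sum_i \lambda_i = 1$ (your uniform design $u$) for the other direction, and finish with the same parameter bookkeeping. The difference is in how the lower bound on $\vol(\mvee(S))$ is certified. The paper sets up the log-determinant convex program $(\textsf{MVEE})$, forms its Lagrangian, and invokes weak duality to get inequality~\eqref{eq:opt-dual}, which with $\sum_i\lambda_i=1$ reduces to $\opt_T \ge \ln\det\bigl(\tfrac1r M\bigr)$. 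You prove exactly this inequality from scratch: your trace/AM--GM claim ($\tr(QM_u)\le 1$ implies $\det(Q)\det(M_u)\le d^{-d}$) is a self-contained, purely geometric derivation of that weak-duality step, with no optimization formulation needed. What your route buys is elementarity and brevity — no Lagrangian, no discussion of when $\sum_i\lambda_i v_iv_i^T$ is invertible, and the volume comparison is explicit rather than passing through $\opt_X - \opt_T$. What the paper's route buys is that the dual bound~\eqref{eq:opt-dual} is stated for general nonnegative $\lambda$, which situates the construction within the standard optimal-design/John-ellipsoid duality framework used throughout that literature. Both yield the slightly stronger exponent $(1+\eps)^{d/2}$ if one tracks constants, comfortably within the claimed $(1+\eps)^d$.
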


To formulate the MVEE problem, recall that any ellipsoid $\cE$ can be defined using a positive semidefinite (PSD) matrix $H$, as
\[ \cE = \{x : x^T H x \le d \}, \] and for $\cE$ defined as such, we have $\vol(\cE) = \det (H^{-1})$, up to a factor that only depends on the dimension $d$ (i.e., is independent of the choice of the ellipsoid). Thus, to find $\cE$, we can consider the following optimization problem.

\begin{align*}
 (\textsf{MVEE}):    \min ~ -\ln\det(H) & \quad \text{subject to} \\
     v_i^T H v_i &\le d \quad \forall i \in [n], \\
     H &\succeq 0.
\end{align*}

It is well known (e.g.,~\cite{boyd2004convex}) that this is a convex optimization problem. For any $\lambda \in \R^n$ with $\lambda_i \ge 0$ for all $i\in [n]$, the Lagrangian for this problem can be defined as:

\[ \sL (H; \lambda) = - \ln \det(H) + \sum_{i \in [n]} \lambda_i (v_i^T H v_i - d).  \]

Let $\opt$ be the optimal value of the problem $\textsf{MVEE}$ defined above. For any $\lambda$ with non-negative coordinates, we have
\[ \opt \ge \min_H \sL(H; \lambda), \]
where the minimization is over the feasible set for $\textsf{MVEE}$; this is because over the feasible set, the second term of the definition of $\sL(H; \lambda)$ is $\le 0$. We can then remove the feasibility constraint, and conclude that
\[ \opt \ge \min_{H \succeq 0} \sL(H; \lambda),  \]
as this only makes the minimum smaller. For any given $\lambda$ with non-negative coordinates, the minimizing $H$ can now be found by setting the derivative to $0$, 
\[ - H^{-1} + \sum_{i \in [n]} \lambda_i v_i v_i^T = 0  \iff H = \left( \sum_i \lambda_i v_i v_i^T \right)^{-1}. \]
There is a mild technicality here: if $\lambda$ is chosen such that $\sum_i \lambda_i v_i v_i^T$ is not invertible, then $\min_{H \succeq 0} \sl(H;\lambda) = -\infty$. We will only consider $\lambda$ for which this is not the case.  
Thus, we have that for any $\lambda$ with non-negative coordinates for which $\sum_i \lambda_i v_i v_i^T$ is invertible,
\begin{equation}\label{eq:opt-dual}
     \opt \ge \ln \det \left( \sum_i \lambda_i v_i v_i^T \right) +d - d \sum_i \lambda_i.
\end{equation}

We are now ready to prove Theorem~\ref{thm:mvee-coreset} on small-sized coresets.

\begin{proof}[Proof of Theorem~\ref{thm:mvee-coreset}]
Let $X = \{v_1, v_2, \dots, v_n\}$ be the set of given points, and let $S$ be the output of the algorithm $\lsr$ on $X$, with $r, \delta$ chosen later in~\eqref{eq:mvee-choice-r-delta}. By definition, $S$ is a multi-set of size $r$, and we define $T$ to be its support, $\supp(S)$. We prove that $T$ is a coreset for the MVEE problem on $X$.

To do so, define $\opt_X$ and $\opt_T$ to be the optimum values of the optimization problem $\textsf{MVEE}$ defined earlier on sets $X$ and $T$ respectively. Since the problem on $T$ has fewer constraints, we have $\opt_T \le \opt_X$, and thus we focus on showing that $\opt_X \le (1+\eps)d + \opt_T$. This will imply the desired bound on the volumes.

Let $S$ be the multi-set returned by the algorithm $\lsr$, and let $M := \sum_{i\in S} v_i v_i^T$. Define $\lambda_i = n_i/r$, where $n_i$ is the number of times $i$ appears in $S$. By definition, we have that $\sum_{i \in [n]} \lambda_i = 1$. Further, if we define $H := (\sum_{i\in [n]} \lambda_i v_i v_i^T)^{-1}$, we have $ H^{-1} = \frac{1}{r} \cdot M$.

Now, using Corollary~\ref{cor:local-search}, we have that for all $j \in [n]$,
\[ v_j^T M^{-1} v_j < \frac{d+r\delta}{r-d+1} \implies v_j^T H v_j < \frac{ r(d+r\delta)}{r-d+1} = d \left(1 + \frac{d-1}{r-d+1} \right) \left( 1 + \frac{r\delta}{d}  \right). \]
Our choice of parameters will be such that both the terms in the parentheses are $(1+\eps/4)$. For this, we can choose $r, \delta$ as in~\eqref{eq:mvee-choice-r-delta}.

Thus, we have that $H' = \frac{H}{(1+\eps)}$ is a feasible solution to the optimization problem \textsf{MVEE} on $X$. This implies that $\opt_X \le -\ln \det(H/(1+\eps)) = d\ln(1+\eps) - \ln\det( H)$. The last equality follows because $\det(H/c) = \det(H)/c^d$ for a $d$-dimensional $H$, and any constant $c$.

Next, using the fact that the $\lambda_i$ are supported on $T = \supp(S)$, we can use~\eqref{eq:opt-dual} to conclude that $\opt_T \ge \ln \det (H^{-1}) = -\ln \det (H)$, where we also used the fact that $\sum_i \lambda_i = 1$. 

Together, these imply that $\opt_X \le (1+\eps) d + \opt_T$, as desired.
\end{proof}

\section{Conclusion}\label{sec:conclude}
We show that a one-swap local search procedure can be used to obtain an efficient construction of volumetric spanners, also known as well-conditioned spanning subsets. This improves (and simplifies) two lines of work that have used this notion in applications ranging from bandit algorithms to matrix sketching and low rank approximation. We then show that the local search algorithm also yields nearly tight results for an $\ell_p$ analog of volumetric spanners. Finally, we obtain $O(d/\epsilon)$ sized coresets for the classic problem of minimum volume enclosing ellipsoid, improving previous results by a $d \log \log d$ term. 

\bibliographystyle{abbrvnat}
\bibliography{spanner}

\newpage

\appendix
\section{Other Applications of Volumetric Spanners}\label{sec:app-other-applications}
We now show some direct applications of our construction of volumetric spanners.

\subsection{Oblivious \texorpdfstring{$\ell_p$}{lp} Subspace Embeddings}
Oblivious subspace embeddings (OSEs) are a well studied tool in matrix approximation, where the goal is to show that there exist sketching matrices that preserve the norm (say the $\ell_p$ norm) of all vectors in an unknown subspace with high probability. The constructions and analyses of OSEs rely on the existence of a well-conditioned spanning set for the vectors of interest. The following follows directly from Theorem~\ref{thm:l2-spanner} (note that we are only using our result for $\ell_2$).   

\begin{theorem}[Improvement of Theorem 1.11 in~\citep{woodruff2023new}]\label{thm:ose}
Let $p\in (1,\infty)$ and let $A \in \mathbb{R}^{n\times d}$. There exists a matrix $R \in \mathbb{R}^{d\times s}$ for $s = 3d$ such that $\| ARe_i \|_p = 1$ for every $i\in [s]$, and for every $x\in \mathbb{R}^d$, $\|Ax\|_p =1$, there exists a $y\in \mathbb{R}^{s}$ such that $Ax = ARy$ and $\|y\|_2 \le 1$. 
\end{theorem}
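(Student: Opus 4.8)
The plan is to apply the $\ell_2$ volumetric-spanner machinery of Section~\ref{sec:ls-analysis} not to a finite point set, but to the (infinite, yet compact) \emph{unit $\ell_p$-sphere} of the column space of $A$. Concretely, let $V = \mathrm{colspace}(A) \subseteq \mathbb{R}^n$, and assume without loss of generality that $A$ has full column rank (otherwise I restrict to $V$ and pad $R$ with arbitrary unit-norm columns at the end, setting the unused coordinates of $y$ to zero). Define $X := \{ w \in V : \|w\|_p = 1\}$. Since $V$ is finite-dimensional and all norms on it are equivalent, $X$ is closed and bounded, hence compact. The goal reduces to finding $w_1, \dots, w_s \in X$ (with $s = 3d$) such that every $w \in X$ admits a representation $w = \sum_i y_i w_i$ with $\|y\|_2 \le 1$: taking $R e_i$ to be any preimage of $w_i$ under $A$ (which exists since $w_i \in V$) then yields $R$, because $A R e_i = w_i$ has $\|A R e_i\|_p = 1$, and $A R y = \sum_i y_i w_i = w = Ax$ for any $w = Ax \in X$.

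The key step, and the one departure from the finite setting, is producing this spanning set without running local search (which needs finiteness to terminate). Instead I would invoke compactness directly: consider the continuous map $(w_1, \dots, w_r) \mapsto \det\big(\sum_{i=1}^r w_i w_i^T\big)$ on $X^r$ with $r = 3d$. As $X^r$ is compact, this attains a maximum at some $(w_1^*, \dots, w_r^*)$; write $M = \sum_i w_i^* (w_i^*)^T$. Because $X$ spans $V$ and $r \ge d$, the maximum value is strictly positive, so $M$ is invertible on $V$. At the maximizer, no single swap can increase the determinant: for every $w \in X$ and every $i \in [r]$, $\det(M - w_i^* (w_i^*)^T + w w^T) \le \det(M)$. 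This is precisely the termination condition analyzed in Lemma~\ref{lem:det-swap} and Lemma~\ref{lem:end-local-search}, now holding for \emph{every} $w \in X$ and with $\delta = 0$ (the inequality is non-strict, which does not affect the argument).

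Carrying out the summation from the proof of Lemma~\ref{lem:end-local-search} — using $\sum_i \tau_i = d$ from Lemma~\ref{lem:tau-bound} together with the identity $\sum_i (w_i^{*T} M^{-1} w)^2 = \tau_w$ — then gives, for every $w \in X$,
\[ \tau_w := w^T M^{-1} w \;\le\; \frac{d}{r - d + 1} = \frac{d}{2d+1} < 1. \]
Finally I would reproduce the pseudoinverse argument of Theorem~\ref{thm:l2-spanner}: letting $U$ be the matrix whose columns are $w_1^*, \dots, w_r^*$ and setting $y = U^{\dagger} w$ for a given $w = Ax \in X$, we have $U y = w$ (as $w \in V = \mathrm{range}(U)$) and $\|y\|_2^2 = w^T (U U^T)^{-1} w = w^T M^{-1} w = \tau_w < 1$. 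Hence $\|y\|_2 \le 1$ and $A R y = U y = Ax$, which is exactly property~(2), while $\|A R e_i\|_p = \|w_i^*\|_p = 1$ gives property~(1).

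The main obstacle is the passage from the finite to the continuous setting: the existence proof behind Theorem~\ref{thm:l2-spanner} relies on local search terminating, which is meaningless over an infinite $X$. I expect the resolution above — replacing ``output of local search'' by ``global maximizer of the determinant over the compact set $X^r$'', whose first-order (local) optimality conditions coincide with the $\delta \to 0$ limit of Corollary~\ref{cor:local-search} — to be the conceptual crux. Everything downstream (the $\tau$-bound arithmetic and the pseudoinverse norm computation) is identical to the finite case and uses only that $M$ is invertible, which compactness plus the spanning property of $X$ guarantees. Notably, the argument uses nothing about $p$ beyond compactness of the $\ell_p$-sphere, so the same proof in fact goes through for every $p \in [1, \infty]$.
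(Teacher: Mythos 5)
Your proposal is correct and follows essentially the same route as the paper: apply the $\ell_2$-volumetric-spanner guarantee to the set $\{Ax : \|Ax\|_p = 1\}$ and read off $R$ and $y$ from the spanning subset via the pseudoinverse. The paper simply cites Theorem~\ref{thm:l2-spanner} (stated for finite sets) and notes the argument is non-constructive; your replacement of ``local search terminates'' by ``a global maximizer of $\det\bigl(\sum_i w_i w_i^T\bigr)$ over the compact set $X^r$ exists, and its exchange-optimality conditions give the $\delta=0$ case of Lemma~\ref{lem:end-local-search}'' is exactly the right way to make that passage to an infinite set rigorous (modulo the routine step of identifying the column space with $\mathbb{R}^d$ so that the determinants are genuinely $d\times d$).
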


The Theorem follows by considering the set
\[ X = \{Ax : \norm{Ax}_p = 1\}, \]
and considering a well conditioned spanning subset in the $\ell_2$ norm. Theorem~\ref{thm:l2-spanner} shows the existence of such a subset with $s=3d$, thus the theorem follows.

However, note that the proof is non-constructive. In order to make it polynomial time, we need to show that the local search procedure can be implemented efficiently. For $p\ge 2$, this may be possible via the classic result of~\citet{kindler2010ugc} on $\ell_p$ variants of the Gr\"othendieck inequality, but we note that the applications in~\citep{woodruff2023new} require only the existential statement. 
\subsection{Entrywise Huber Low Rank Approximation}\label{sec:huber}

The Huber loss is a classic method introduced as a robust analog to the least squares error. A significant amount of work has been dedicated to finding low-rank approximations of a matrix with the objective of minimizing the entry-wise Huber loss. The following offers a slight improvement over~\citep{woodruff2023new}.
\begin{theorem}\label{thm:huber}
Let $A\in \mathbb{R}^{n\times d}$ and let $k\ge 1$. There exists a polynomial time algorithm that outputs a subset $S\subset [d]$ of columns in $A$ of size $O(k\log d)$ and $X\in \mathbb{R}^{S \times d}$ such that
\begin{align*}
    \|A - A|^S X\|_H \le O(k) \min_{rank(A_k)\le k} \|A - A_k\|_H,
\end{align*}
where $A|^S$ denotes the matrix whose columns are the columns of $A$ indexed by $S$ and $\|\cdot \|_H$ denotes the entrywise Huber loss.
\end{theorem}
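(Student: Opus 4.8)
The plan is to follow the column-subset-selection framework of~\cite{woodruff2023new} for entrywise robust losses, but to substitute our improved volumetric spanner (Theorem~\ref{thm:l2-spanner}) for the well-conditioned spanning subset they construct through the MVEE machinery. Let $A^\ast$ be an optimal rank-$k$ Huber approximation of $A$, so that $\|A-A^\ast\|_H = \opt := \min_{\mathrm{rank}(A_k)\le k}\|A-A_k\|_H$, and fix a factorization $A^\ast = UV$ with $U\in\R^{n\times k}$ and $V\in\R^{k\times d}$. The columns $v_1,\dots,v_d$ of $V$ lie in $\R^k$, and the role of the spanner is to isolate a few columns of $A$ from which every column can be reconstructed with bounded coefficients.

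First I would apply Theorem~\ref{thm:l2-spanner} to $\{v_j\}_{j\in[d]}\subset\R^k$ (working inside their span, of dimension at most $k$) to obtain $T\subseteq[d]$ with $|T|=O(k)$ such that $v_j=\sum_{i\in T}\alpha_{ij}v_i$ with $\|\alpha_{\cdot j}\|_2\le 1$ for every $j$. Left-multiplying by $U$ gives the \emph{exact} identity $A^\ast_j=\sum_{i\in T}\alpha_{ij}A^\ast_i$. Writing $E:=A-A^\ast$ for the residual (so $\|E\|_H=\opt$) and substituting $A^\ast_i=A_i-E_i$, the candidate reconstruction $X$ with $X_{ij}=\alpha_{ij}$ supported on $T$ produces the column-wise error $A_j-(A|^T X)_j = E_j-\sum_{i\in T}\alpha_{ij}E_i$. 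The remaining work is to bound $\sum_j \|E_j-\sum_{i\in T}\alpha_{ij}E_i\|_H$, for which I would use the two structural properties of Huber loss that hold entrywise (by convexity and at-most-quadratic growth): the approximate triangle inequality $H(a+b)\le 2\,(H(a)+H(b))$ and the approximate homogeneity $H(cx)\le\max(|c|,c^2)\,H(x)$, combined with $|\alpha_{ij}|\le 1$.

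The main obstacle is exactly this error-propagation step. Expanding the error of a single column through a many-term triangle inequality already costs a factor of $O(|T|)=O(k)$, which is the source of the multiplicative $O(k)$ rather than $(1+\eps)$; worse, the surviving term $\sum_{i\in T}\alpha_{ij}E_i$ charges the residual mass $\|E_i\|_H$ of the \emph{selected} columns once for each $j$, and the total $\sum_{i\in T}\|E_i\|_H\sum_j|\alpha_{ij}|$ is uncontrolled if the residual is adversarially concentrated on $T$ or if the coefficients carry large $\ell_1$ mass. This is why one cannot stop at the $O(k)$ deterministic set: following~\cite{woodruff2023new} I would augment $T$ by a sampling/boosting step that repeats the selection across $O(\log d)$ residual-magnitude scales, guaranteeing that each column is represented by a surrogate of comparable error and thereby capping the amplification. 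This yields the final set $S\supseteq T$ of size $O(k\log d)$ together with the $O(k)$ approximation. The entire improvement over~\cite{woodruff2023new} is inherited from the first step: their spanning subset has size $O(k\log\log k)$, whereas Theorem~\ref{thm:l2-spanner} gives $O(k)$, and this saving propagates through the otherwise unchanged remainder of their analysis.
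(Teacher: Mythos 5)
Your proposal matches the paper's proof in substance: the paper simply invokes Theorem~1.6 of~\citep{woodruff2023new} as a black box and substitutes the $O(1)$-approximate $\ell_2$-volumetric spanner of size $O(d)$ from Theorem~\ref{thm:l2-spanner} for their $O(k\log\log k)$-sized spanning subset, which is exactly the substitution you describe. Your additional unpacking of the reduction (the rank-$k$ factorization, the spanner applied to the columns of $V$, the error propagation, and the deferral to the $O(\log d)$-scale boosting step of~\citep{woodruff2023new}) is a faithful sketch of the cited argument rather than a different route.
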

Note that the size of $S$ is reduced from $O(k\log\log k \log d)$ to $O(k\log d)$. The proof of~Theorem~\ref{thm:huber} follows from Theorem~1.6 in~\citep{woodruff2023new} and our improved construction for $\ell_2$-volumetric spanner, i.e., $O(1)$-approximate spanning subset of size $O(d)$ (see Theorem~\ref{thm:l2-spanner}).
\subsection{Average Top \texorpdfstring{$k$}{k} Subspace Embedding}\label{sec:avg-top}

For a given vector $v\in \mathbb{R}^n$, the average top $k$ loss is defined as \[\|v\|_{\mathrm{AT}_k} := \frac{1}{k}\sum_{i\in [k]}|v_{[i]}|,\]
where $v_{i}$ denotes the $i$th largest coordinate in $v$.

Using the result of~\citep{woodruff2023new} relating the problem of average top $k$ subspace embedding to $\ell_2$-volumetric spanners as a black-box, we have the following theorems for small $k$ (i.e., $k\le 3d$) and large $k$ (i.e., $k > 3d$) respectively.   
\begin{theorem}[small $k$]\label{thm:avg-top-k-space-embedding}
Let $A\in \mathbb{R}^{n\times d}$ and let $k\le 3d$. There exists a set $S \subset [n]$ of size $O(d)$ such that for all $x\in \mathbb{R}^d$,
\begin{align*}
    \|A|_{S} x\|_{\mathrm{AT}_k} \le \|Ax\|_{\mathrm{AT}_k} \le O(\sqrt{kd}) \cdot \|A|_S x\|_{\mathrm{AT}_k},
\end{align*}
where $A|_S$ denotes the set of rows in $A$ indexed by $S$.
\end{theorem}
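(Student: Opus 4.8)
The plan is to instantiate Theorem~\ref{thm:l2-spanner} on the \emph{rows} of $A$ and then reduce the average-top-$k$ comparison to two elementary norm inequalities. Write $a_1,\dots,a_n\in\mathbb{R}^d$ for the rows of $A$ (restricting to their span, without loss of generality). Running $\lsnr$ produces a set $S$ with $|S|=3d=O(d)$ such that every row can be written $a_j=\sum_{i\in S}\alpha^{(j)}_i a_i$ with $\|\alpha^{(j)}\|_2\le 1$. Since $k\le 3d=|S|$, the average-top-$k$ norms of both $Ax$ and $A|_S x$ are well-defined.

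The left inequality $\|A|_S x\|_{\mathrm{AT}_k}\le\|Ax\|_{\mathrm{AT}_k}$ is immediate: the coordinates of $A|_S x$ are precisely the sub-collection $\{(Ax)_i:i\in S\}$ of the coordinates of $Ax$, so the sum of the $k$ largest absolute values of the former is at most that of the latter. For the right inequality I would split the argument into two steps. \emph{Step 1 (bounding $Ax$ by the $\ell_2$ norm on $S$).} For any row $j$, the spanner property and Cauchy--Schwarz give $|\langle a_j,x\rangle|=|\sum_{i\in S}\alpha^{(j)}_i\langle a_i,x\rangle|\le\|\alpha^{(j)}\|_2\,\|A|_S x\|_2\le\|A|_S x\|_2$. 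Summing this over the $k$ rows achieving the top-$k$ values of $Ax$ and dividing by $k$ yields $\|Ax\|_{\mathrm{AT}_k}\le\|A|_S x\|_2$.

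\emph{Step 2 (bounding the $\ell_2$ norm on $S$ by its average-top-$k$ norm).} Writing $y=A|_S x\in\mathbb{R}^{3d}$ and ordering its coordinates by decreasing magnitude, I would split $\|y\|_2^2=\sum_{i\le k}y_{[i]}^2+\sum_{i>k}y_{[i]}^2$. The head is controlled by $\sum_{i\le k}y_{[i]}^2\le(\sum_{i\le k}|y_{[i]}|)^2=k^2\|y\|_{\mathrm{AT}_k}^2$, while for the tail each term satisfies $|y_{[i]}|\le y_{[k]}\le\|y\|_{\mathrm{AT}_k}$, so $\sum_{i>k}y_{[i]}^2\le 3d\,\|y\|_{\mathrm{AT}_k}^2$. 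Hence $\|y\|_2^2\le(k^2+3d)\|y\|_{\mathrm{AT}_k}^2$, and the bound $k^2+3d\le 6kd$ (valid whenever $1\le k\le 3d$) gives $\|A|_S x\|_2\le O(\sqrt{kd})\,\|A|_S x\|_{\mathrm{AT}_k}$. Chaining the two steps proves $\|Ax\|_{\mathrm{AT}_k}\le O(\sqrt{kd})\,\|A|_S x\|_{\mathrm{AT}_k}$.

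The only delicate point is Step 2, where the target factor $O(\sqrt{kd})$ must emerge exactly; the head/tail split is what makes this work, and the key arithmetic is that $k\le 3d$ forces $k^2\le 3kd$, so that $k^2+3d=O(kd)$ rather than the naive $O(d^2)$ one gets from bounding $\|y\|_2\le\sqrt{3d}\,\|y\|_\infty$. (Alternatively, one could invoke the black-box reduction of~\cite{woodruff2023new} from average-top-$k$ subspace embeddings to $\ell_2$-volumetric spanners, but the direct argument above uses only Theorem~\ref{thm:l2-spanner}.)
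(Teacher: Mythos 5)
Your proof is correct, but it takes a genuinely different route from the paper. The paper does not prove this theorem from scratch: it simply invokes Theorem~3.11 of \citet{woodruff2023new} as a black box (the reduction from average-top-$k$ subspace embeddings to $\ell_2$-volumetric spanners) and plugs in the improved spanner of Theorem~\ref{thm:l2-spanner}, so the entire ``proof'' is a one-line citation. You instead give a self-contained argument using only the spanner property: the left inequality from the sub-collection observation, and the right inequality by chaining $\|Ax\|_{\mathrm{AT}_k}\le\|A|_Sx\|_2$ (Cauchy--Schwarz against the coefficient vectors) with $\|A|_Sx\|_2\le O(\sqrt{kd})\,\|A|_Sx\|_{\mathrm{AT}_k}$ via the head/tail split. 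I checked the arithmetic: the head bound $\sum_{i\le k}y_{[i]}^2\le(k\|y\|_{\mathrm{AT}_k})^2$, the tail bound $|y_{[i]}|\le|y_{[k]}|\le\|y\|_{\mathrm{AT}_k}$ for $i>k$, and the final estimate $k^2+3d\le 6kd$ for $1\le k\le 3d$ are all valid, and the hypothesis $k\le 3d=|S|$ is exactly what makes $\|A|_Sx\|_{\mathrm{AT}_k}$ well-defined and the tail short enough. What your approach buys is transparency and independence from the external reference (in effect you have reproved the relevant special case of the Woodruff et al.\ reduction); what the paper's approach buys is brevity and uniformity, since the same citation pattern handles the large-$k$ regime and the cascaded-norm application. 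One very minor quibble: your parenthetical claim that the naive bound $\|y\|_2\le\sqrt{3d}\,\|y\|_\infty$ yields $O(d^2)$ is off (it yields roughly $O(k\sqrt{d})$ after relating $\|y\|_\infty$ to $\|y\|_{\mathrm{AT}_k}$), but this is commentary and does not affect the proof.
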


\begin{theorem}[large $k$]
Let $A \in \mathbb{R}^{n\times d}$ and let $k \ge k_0$ where $k_0 = O(d + \frac{1}{\delta})$. Let $P_1, \dots, P_{\frac{k}{t}}$ be a random partition of $[n]$ into $\frac{k}{t}$ groups where $t = O(d + \log\frac{1}{\delta})$. For every $i\in [\frac{k}{t}]$, there exists a set $S_i \subset N_i$ of size $O(d)$ such that with probability at least $1-\delta$, for all $x\in \mathbb{R}^d$,
\begin{align*}
    \|A|_S x\|_{\mathrm{AT}_k} \le \|Ax\|_{\mathrm{AT}_k} \le O(\sqrt{td})\cdot \|A|_S x\|_{\mathrm{AT}_k},
\end{align*}
where $S:=\bigcup_{i\in [\frac{k}{t}]} S_i$ and $A|_S$ denotes the set of rows in $A$ indexed by $S$.
\end{theorem}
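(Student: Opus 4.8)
The plan is to reduce the global average-top-$k$ problem to the already-established small-$k$ case, run independently inside each group of the random partition, and then glue the per-group coresets together. The backbone is the variational (superquantile/CVaR) representation
\[ \|v\|_{\mathrm{AT}_k} = \min_{s \ge 0}\Big[\, s + \tfrac{1}{k}\sum_{j\in[n]}(|v_j|-s)_+ \,\Big], \]
together with the analogous identity within each part $P_i$ using parameter $t$. Writing $g = k/t$ for the number of groups and $v = Ax$, I would first record the easy direction $\tfrac{1}{g}\sum_i \|v|_{P_i}\|_{\mathrm{AT}_t} \le \|v\|_{\mathrm{AT}_k}$, which holds \emph{deterministically}: plugging the global minimizer $s^\ast$ into each group's variational expression and averaging telescopes exactly into the global one since $gt=k$ (equivalently, any $t$ coordinates per group form at most $k$ coordinates overall, whose mass is dominated by the global top-$k$ mass). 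The final left inequality $\|A|_S x\|_{\mathrm{AT}_k}\le\|Ax\|_{\mathrm{AT}_k}$ then follows immediately, since $S\subseteq[n]$ means the top-$k$ values of $A|_Sx$ are a sub-multiset of those of $Ax$.

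The first substantive step is the hard direction of the reduction: with probability $1-\delta$ over the partition, $\|v\|_{\mathrm{AT}_k} \le O(1)\cdot \tfrac{1}{g}\sum_{i=1}^{g}\|v|_{P_i}\|_{\mathrm{AT}_t}$ \emph{simultaneously for all} $x$. For a fixed $x$, the global top-$k$ coordinates land in the $g$ groups with expected count exactly $t$ per group; a group only ``loses'' those of its global-top-$k$ members ranked beyond $t$ inside the group, and these carry the smallest of the top-$k$ values. A Bernstein/Chernoff estimate shows the lost mass is at most a constant fraction of the total top-$k$ mass except with probability $\exp(-\Omega(t))$. To make this uniform over the $d$-dimensional column space $\{Ax\}$, I would discretize the unit ball of that space by an $\epsilon$-net of cardinality $\exp(O(d))$ and union bound; the resulting failure $\exp(O(d)-\Omega(t))\le\delta$ forces $t=\Theta(d+\log(1/\delta))$, while $k\ge k_0=\Omega(d+1/\delta)$ ensures $g=k/t$ is large enough and each group receives at least $t$ coordinates for the concentration to bite.

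Next I would treat each group in isolation by invoking the small-$k$ construction (Theorem~\ref{thm:avg-top-k-space-embedding}) with parameter $t$ in place of $k$, applied to the submatrix indexed by $P_i$; equivalently, taking $S_i$ to be an $\ell_2$-volumetric spanner of the rows $\{v_j : j\in P_i\}$ via Theorem~\ref{thm:l2-spanner}. This yields $S_i\subseteq P_i$ of size $O(d)$ with $\|v|_{S_i}\|_{\mathrm{AT}_t}\le\|v|_{P_i}\|_{\mathrm{AT}_t}\le O(\sqrt{td})\,\|v|_{S_i}\|_{\mathrm{AT}_t}$ for every $x$, where the spanner supplies the pointwise bound $|(Ax)_j|\le\|(Ax)|_{S_i}\|_2$ for each $j\in P_i$. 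Finally I would chain the pieces with $S=\bigcup_i S_i$: since $t=\Omega(d)$ can be arranged so that $|S_i|\le t$ and hence $|S|\le g\cdot O(d)\le k$, we get $\|A|_Sx\|_{\mathrm{AT}_k}=\tfrac1k\|v|_S\|_1=\tfrac1g\sum_i\|v|_{S_i}\|_{\mathrm{AT}_t}$, and combining this identity with the per-group approximation and the partition reduction gives $\|A|_Sx\|_{\mathrm{AT}_k}\le\|Ax\|_{\mathrm{AT}_k}\le O(\sqrt{td})\,\|A|_Sx\|_{\mathrm{AT}_k}$, with $|S|=O(dk/t)$.

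I expect the main obstacle to be the uniform-over-$x$ upper bound in the partition reduction. The per-direction Chernoff estimate on the lost mass is routine, but promoting it to all $x$ requires a net whose granularity is compatible with the non-Lipschitz behavior of $\mathrm{AT}_k$ near the top-$k$ threshold, and it is precisely this union bound that pins down $t=\Theta(d+\log(1/\delta))$ and the requirement $k\ge k_0$. A secondary subtlety is ensuring the net lives in the fixed $d$-dimensional space rather than in $\R^n$, so that its cost is $\exp(O(d))$ independent of $n$; the variational form is what makes the quantity being netted a benign (piecewise-linear, $1$-Lipschitz in the threshold) function amenable to this reduction.
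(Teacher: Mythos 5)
Your proposal and the paper's proof are different in kind: the paper proves this theorem in one sentence, by invoking Theorem~3.12 of \citet{woodruff2023new} as a black box and substituting the improved $\ell_2$-volumetric spanner of Theorem~\ref{thm:l2-spanner} (size $3d$, approximation $1$) for the $O(d\log\log d)$-sized spanning subset used there; the entire random-partition machinery --- the flattening reduction from $\mathrm{AT}_k$ to per-group $\mathrm{AT}_t$, the concentration argument, and the gluing --- is imported wholesale from the prior work. You instead reconstruct that machinery from scratch. Your architecture is the right one and matches what the cited theorem does: the deterministic inequality $\frac{1}{g}\sum_i \|v|_{P_i}\|_{\mathrm{AT}_t} \le \|v\|_{\mathrm{AT}_k}$ (with $g = k/t$), the probabilistic reverse inequality, per-group spanners $S_i$ via Theorem~\ref{thm:l2-spanner} giving $|(Ax)_j| \le \|(Ax)|_{S_i}\|_2$ for $j \in P_i$, and the identity $\|A|_S x\|_{\mathrm{AT}_k} = \frac{1}{k}\|v|_S\|_1$ once $|S| = O(dk/t) \le k$. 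What your route buys is a self-contained proof; what the paper's route buys is that all of the delicate probabilistic content is someone else's problem.

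Two steps in your sketch are not actually discharged, and they are precisely the content hiding inside the cited theorem. First, the uniform-over-$x$ version of the flattening reduction: you correctly identify that a fixed-$x$ Chernoff bound plus an $\exp(O(d))$-net is the intended mechanism, but you do not resolve the issue you yourself raise, namely that $\mathrm{AT}_k$ of $Ax$ is not controlled by net granularity in the obvious way (the identity of the top-$k$ coordinates changes discontinuously with $x$), so the union bound over the net needs a genuinely careful argument (e.g., working with the variational form and a chaining/monotonicity trick), not just ``routine'' concentration. Second, your per-group distortion bound $\|v|_{P_i}\|_{\mathrm{AT}_t} \le O(\sqrt{td})\,\|v|_{S_i}\|_{\mathrm{AT}_t}$ is stated as ``invoking the small-$k$ theorem with parameter $t$,'' but that theorem's hypothesis is $k \le 3d$, and $t = O(d + \log(1/\delta))$ can exceed $3d$ when $\delta$ is tiny; in that regime the naive chain $\|v|_{P_i}\|_{\mathrm{AT}_t} \le \|v|_{S_i}\|_2 \le \|v|_{S_i}\|_1 = t\,\|v|_{S_i}\|_{\mathrm{AT}_t}$ only yields distortion $O(t)$, which is worse than the claimed $O(\sqrt{td})$, so obtaining $\sqrt{td}$ requires an extra Cauchy--Schwarz step exploiting $|S_i| = O(d)$ that you have not written down. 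Neither issue is a wrong turn --- both are fixable and both are handled in the reference the paper cites --- but as written your proposal asserts rather than proves the two hardest inequalities.
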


In both regimes, compared to~\citep{woodruff2023new}, our improved bounds for $\ell_2$-volumetric spanner saves a factor of $\log\log d$ in the number of rows and a factor of $\sqrt{\log\log d}$ in the distortion. The proofs of above theorems respectively follows from Theorem 3.11 and 3.12 of~\citep{woodruff2023new} together with our Theorem~\ref{thm:l2-spanner}.
\subsection{Cascaded Norm Subspace Embedding}
Next, we explore the implications of the improved bound of $\ell_2$-volumetric spanner (i.e., Theorem~\ref{thm:l2-spanner}) for embedding a subspace of matrices under $(\|\|_\infty, \|\|)$-cascaded norm, which first evaluates an arbitrary norm of the rows and then return the maximum value over the $n$ rows.

The following is a consequence of Theorem~3.13 in~\citep{woodruff2023new} and our Theorem~\ref{thm:l2-spanner}. 
We describe our result for the $(\|\|_\infty, \|\|)$-cascaded norm, which first evaluates an arbitrary norm of rows and then return the maximum value over the $n$ rows.
\begin{theorem}[$(\|\|_\infty, \|\|)$-subspace embedding]
    Let $A\in \mathbb{R}^{n\times d}$ and let $\|\|$ be any norm on $\mathbb{R}^m$. There exists a set $S\subset [n]$ of size at most $3d$ such that for every $X\in \mathbb{R}^{d\times m}$,
    \begin{align*}
        \|A|_S X\|_{(\|\|_\infty, \|\|)} \le \|A X\|_{(\|\|_\infty, \|\|)} \le O(\sqrt{d}) \|A|_S X\|_{(\|\|_\infty, \|\|)}
    \end{align*}
\end{theorem}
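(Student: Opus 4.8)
The plan is to reduce the $(\|\cdot\|_\infty, \|\cdot\|)$-cascaded norm statement to the $\ell_2$-volumetric spanner guarantee of Theorem~\ref{thm:l2-spanner}. First I would apply $\lsnr$ with $r = 3d$ and $\delta = 1/3$ to the set of rows of $A$, viewed as $n$ vectors $v_1, \dots, v_n \in \R^d$, and let $S \subset [n]$ be the output. By Theorem~\ref{thm:l2-spanner} this is an $\ell_2$-volumetric spanner of size $3d$: for every $j \in [n]$ there is a coefficient vector $\alpha^{(j)} \in \R^S$ with $\|\alpha^{(j)}\|_2 \le 1$ such that $v_j = \sum_{i \in S} \alpha^{(j)}_i v_i$. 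The left-hand inequality $\|A|_S X\|_{(\|\cdot\|_\infty,\|\cdot\|)} \le \|AX\|_{(\|\cdot\|_\infty,\|\cdot\|)}$ is immediate, since $S \subseteq [n]$ means the maximum over the rows indexed by $S$ is taken over a subset of all rows, and the cascaded norm is a max of nonnegative row-norms.

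The substantive direction is the upper bound. Fix any $X \in \R^{d \times m}$. For the $j$th row of $AX$, I would use the volumetric-spanner representation to write $v_j^T X = \sum_{i \in S} \alpha^{(j)}_i \, v_i^T X$, so that the $j$th row of $AX$ is a linear combination, with coefficients $\alpha^{(j)}$, of the rows of $A|_S X$. Applying the triangle inequality and homogeneity of the arbitrary norm $\|\cdot\|$ on $\R^m$ gives
\begin{align*}
\|v_j^T X\| \le \sum_{i \in S} |\alpha^{(j)}_i| \cdot \|v_i^T X\| \le \|\alpha^{(j)}\|_1 \cdot \max_{i \in S} \|v_i^T X\| = \|\alpha^{(j)}\|_1 \cdot \|A|_S X\|_{(\|\cdot\|_\infty, \|\cdot\|)}.
\end{align*}
Now I would bound $\|\alpha^{(j)}\|_1 \le \sqrt{|S|}\,\|\alpha^{(j)}\|_2 \le \sqrt{3d}$ using Cauchy--Schwarz (Lemma~\ref{lem:holder-ineq} with $p=1, q=2$ on the $|S|=3d$ coordinates) together with $\|\alpha^{(j)}\|_2 \le 1$. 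Since this holds uniformly over all $j \in [n]$, taking the maximum over $j$ yields $\|AX\|_{(\|\cdot\|_\infty, \|\cdot\|)} \le \sqrt{3d}\,\|A|_S X\|_{(\|\cdot\|_\infty, \|\cdot\|)} = O(\sqrt{d})\,\|A|_S X\|_{(\|\cdot\|_\infty, \|\cdot\|)}$, which is the claimed distortion.

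The step I expect to require the most care is confirming that the volumetric-spanner property transfers correctly from the row vectors $v_j$ to their images $v_j^T X$ under right-multiplication by $X$. The key observation is that the representation $v_j = \sum_{i \in S} \alpha^{(j)}_i v_i$ is an identity in $\R^d$ that is independent of $X$, so right-multiplying both sides by $X$ preserves it exactly with the same coefficients; the distortion thus comes entirely from the $\ell_1$-norm of a fixed coefficient vector with unit $\ell_2$-norm, and the arbitrary outer norm $\|\cdot\|$ enters only through the triangle inequality and positive homogeneity, which every norm satisfies. The improvement over~\citep{woodruff2023new} is inherited directly: plugging in our size-$3d$ spanner in place of their $O(d \log\log d)$ construction removes the $\log\log d$ factor in the cardinality of $S$ while keeping the same $O(\sqrt{d})$ distortion.
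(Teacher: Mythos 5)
Your proof is correct and follows essentially the same route as the paper: the paper simply invokes Theorem~3.13 of \citet{woodruff2023new} as a black box together with Theorem~\ref{thm:l2-spanner}, and the argument you spell out (apply the size-$3d$ $\ell_2$-volumetric spanner to the rows of $A$, push the representation through $X$, and convert $\|\alpha\|_2 \le 1$ to $\|\alpha\|_1 \le \sqrt{3d}$ via Cauchy--Schwarz) is exactly the content of that cited reduction. Your write-up is in fact more self-contained than the paper's, but there is no substantive difference in approach.
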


\end{document}